\def\qu#1 {\fbox {\footnote {\ }}\ \footnotetext { From Qu: {\color{red}#1}}}
\def\hqu#1 {}
\def\kq#1 {\fbox {\footnote {\ }}\ \footnotetext { From KangQuan: {\color{blue}#1}}}
\def\hkq#1 {}
\newtheorem{Th}{Theorem}[section]
\newtheorem{Cor}[Th]{Corollary}
\newtheorem{Lemma}[Th]{Lemma}
\newtheorem{Def}[Th]{Definition}
\newtheorem{Conj}[Th]{Conjecture}
\newcommand{\gf}{{\mathbb F}}
\newcommand{\figcaption}{\def\@captype{figure}\caption}
\newcommand{\tabcaption}{\def\@captype{table}\caption}
\begin{document}
	\title{New Permutation Trinomials Constructed from Fractional Polynomials}

\author{{Kangquan Li, Longjiang Qu, Chao Li and Shaojing Fu**}
         \thanks{Kangquan Li, Longjiang Qu and Chao Li are with the College of Science,
          National University of Defense Technology, Changsha, 410073, China.
          Shaojing Fu is with the College of Computer Science, National University of Defense Technology, Changsha, 410073, China.
          E-mail: a940672099@qq.com, ljqu\_happy@hotmail.com, lichao\_nudt@sina.com, shaojing1984@163.com.
          This work is supported by  the National Basic Research Program of China (Grant No. 2013CB338002),
           the Nature Science Foundation of China (NSFC) under Grant 61272484, 11531002, 61572026,
          the Program for New Century Excellent Talents in University (NCET)
           and the Basic Research Fund of National University of Defense Technology (No.CJ 13-02-01). }}

	\maketitle{}

	\begin{abstract}
		Permutation trinomials over finite fields consititute an active research due to their simple algebraic form, additional extraordinary properties and their wide applications in many areas of science and engineering. In the present paper,  six new classes of permutation trinomials over finite fields of even characteristic are constructed from six fractional polynomials. Further, three classes of permutation trinomials over finite fields of characteristic three  are raised. {Distinct from most of the known permutation trinomials which are with fixed exponents, our results are  some general classes of permutation trinomials  with one parameter in the exponents.} Finally, we propose a few conjectures.
	\end{abstract}
    \begin{IEEEkeywords}
    	Finite Field, Permutation Trinomial, Fractional Polynomial.
    \end{IEEEkeywords}

\section{Introduction}
Let $\mathbb{F}_q$ be the finite field with $q$ elements. A polynomial $f\in\mathbb{F}_q[x]$ is called a \emph{permutation polynomial} (PP) if the induced mapping $x\to f(x)$ is a permutation of $\mathbb{F}_q$. The study about permutation polynomials over finite fields attracts people's interest for many years due to their wide applications in coding theory \cite{LC,ST}, cryptography \cite{MN,QY,RSL} and combinatorial designs \cite{DJ}. For example, Ding et al.\cite{DJ} constructed a family of skew Hadamard difference sets via the Dickson permutation polynomial of order five, which disproved the longstanding conjecture on skew Hadamard difference sets. More recent progress on permutation polynomials can be found in \cite{SM,CD,HD1,XH,XH3,XH2,XH4,MZ,ZXLC,Zieve}.

Permutation trinomials over finite fields are in particular interesting for their simple algebraic form and additional extraordinary properties. For instances, a certain permutation trinomial of $\gf_{2^{2m+1}}$ was a major  ingredient in the proof of the Welch conjecture  \cite{HD}. The discovery of {another} class of permutation trinomials by Ball and Zieve \cite{SM} provided a way to prove the construction of the Ree-Tits symplectic spreads of $\mathbf{PG}(3,q)$. Hou acquired a necessary and sufficient condition about determining a special permutation trinomial, which is $ax+bx^q+x^{2q-1}\in\mathbb{F}_{q^2}[x]$. More results about permutation trinomials before 2015 could be referred in \cite[Ch.4]{XH1}. In a recent paper \cite{DQ},  Ding, Qu, Wang, et.al. discovered several new classes of permutation trinomials with nonzero trivial coefficients over finite fields with even characteristic. Moreover, a few new classes of permutation polynomials were given by Li, Qu and Chen \cite{LQC}. Two classes of permutation {trinomials} were found by Ma, et.al. \cite{MZ} as well.

{ The PPs with the form {$x^rh(x^{(q-1)/d})$} over $\gf_q$ are interesting and have been  paid particular attention. There is a connection between the PPs of this type and certain permutations of the subgroup of order $d$ of $\gf_q^\ast$. }
 Throughout this paper, for integer $d>0$, $\mu_d=\{x\in\mathbb{\bar{F}}_q : x^d=1\}$, where $\mathbb{\bar{F}}_q$ denotes the algebraic closure of $\mathbb{F}_q$.
\begin{Lemma}
	\cite{PL,WQ,Zieve}
	\label{lem1}
	Pick $d,r > 0$ with $d\mid (q-1)$, and let $h\in\mathbb{F}_q[x]$. Then $f(x)=x^rh(x^{\left.(q-1)\middle/d\right.})$ permutes $\mathbb{F}_q$ if and only if both
	\begin{enumerate}[(1)]
		\item $\mathrm{gcd}(r,\left.(q-1)\middle/d\right.)=1$ and
		\item $x^rh(x)^{\left.(q-1)\middle/d\right.}$ permutes $\mu_d$.
	\end{enumerate}
\end{Lemma}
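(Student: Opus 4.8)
The plan is to translate everything into the multiplicative structure of $\mathbb{F}_q^\ast$ and its unique subgroup of order $d$. Put $s=(q-1)/d$. Then $\pi\colon\mathbb{F}_q^\ast\to\mu_d$, $\pi(x)=x^s$, is an $s$-to-one surjection whose kernel $H=\{x:x^s=1\}$ is the cyclic subgroup of $\mathbb{F}_q^\ast$ of order $s$, and whose fibres are exactly the cosets $U_\zeta=\{x\in\mathbb{F}_q^\ast:x^s=\zeta\}$ for $\zeta\in\mu_d$. On $U_\zeta$ the polynomial $f$ takes the simple shape $f(x)=x^rh(\zeta)$, and, crucially, $f$ descends to the quotient: since $f(x)^s=(x^s)^rh(x^s)^s$, writing $g(y):=y^rh(y)^s$ (the polynomial in condition~(2)) gives the commuting relation $\pi\circ f=g\circ\pi$. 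Thus $g$ is the self-map of $\mu_d\cong\mathbb{F}_q^\ast/H$ induced by $f$, and the lemma will follow by controlling $f$ on each fibre separately together with the induced action on fibres.

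Before either implication I would isolate two elementary facts. First, on a cyclic group of order $n$ the power map $x\mapsto x^r$ is injective (equivalently bijective) exactly when $\gcd(r,n)=1$, and the same criterion governs injectivity of $x\mapsto x^r$ on any coset $x_0H$, because $(x_0h_1)^r=(x_0h_2)^r$ iff $(h_1h_2^{-1})^r=1$. Second, both sides of the stated equivalence force $h$ to be nonvanishing on $\mu_d$: if $h(\zeta_0)=0$ for some $\zeta_0\in\mu_d$ then $f$ annihilates the whole nonempty fibre $U_{\zeta_0}$ while also $f(0)=0$, so $f$ is not injective; and likewise $g(\zeta_0)=0\notin\mu_d$, so $g$ does not permute $\mu_d$. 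Hence in both directions we may assume $h\neq0$ on $\mu_d$, so that $f$ genuinely maps $\mathbb{F}_q^\ast$ into itself and $g$ maps $\mu_d$ into itself.

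For the ``if'' direction, suppose $\gcd(r,s)=1$ and $g$ permutes $\mu_d$. If $f(x_1)=f(x_2)$ with $x_1,x_2\in\mathbb{F}_q^\ast$, then applying $\pi$ and using injectivity of $g$ on $\mu_d$ gives $x_1^s=x_2^s=:\zeta$, so $x_1,x_2\in U_\zeta$; cancelling the nonzero constant $h(\zeta)$ from $x_1^rh(\zeta)=x_2^rh(\zeta)$ yields $x_1^r=x_2^r$, and injectivity of $x\mapsto x^r$ on the coset $U_\zeta$ (first elementary fact, using $\gcd(r,s)=1$) gives $x_1=x_2$. Together with $f(0)=0$, $f$ permutes $\mathbb{F}_q$. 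For the ``only if'' direction, suppose $f$ permutes $\mathbb{F}_q$, hence $\mathbb{F}_q^\ast$. From the commuting relation $g\circ\pi=\pi\circ f$ is onto $\mu_d$ (as $\pi$ and $f$ are onto $\mu_d$ and $\mathbb{F}_q^\ast$ respectively), and since $\pi(\mathbb{F}_q^\ast)=\mu_d$ this forces $g(\mu_d)=\mu_d$, i.e.\ condition~(2). Restricting $f$ to the fibre $U_1=H$, where $f(x)=x^rh(1)$ with $h(1)\neq0$ constant, injectivity of $f$ forces $x\mapsto x^r$ to be injective on the cyclic group $H$ of order $s$, which is condition~(1).

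The whole argument is essentially bookkeeping once the fibre decomposition and the relation $\pi\circ f=g\circ\pi$ are written down; there is no deep obstacle here. The only two places that genuinely need care are ruling out zeros of $h$ on $\mu_d$ and the equivalence between injectivity of a power map on a coset and on the ambient cyclic group, and I would present both as the preliminary observations above rather than rederive them inline.
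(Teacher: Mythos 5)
Your proof is correct and complete, and it follows essentially the standard argument behind this cited lemma (the paper itself gives no proof, deferring to Park--Lee, Wang and Zieve): decompose $\mathbb{F}_q^\ast$ into the fibres of $x\mapsto x^{(q-1)/d}$, observe that $f$ induces $g(y)=y^rh(y)^{(q-1)/d}$ on $\mu_d$, and control injectivity fibrewise via $\gcd(r,(q-1)/d)=1$. Your two preliminary observations (nonvanishing of $h$ on $\mu_d$ on both sides of the equivalence, and injectivity of the $r$-th power map on a coset of the kernel) are exactly the points that need care, and you handle both correctly.
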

   Let $r,d,h$ be defined as above. According to Lemma \ref{lem1}, to determine if $x^rh(x^{\left.(q-1)\middle/d\right.})$ permutes $\mathbb{F}_q$, the key point is to prove the condition (2) in Lemma \ref{lem1}. This is usually difficult. However, in various special sitiuations, conditions have been found for $x^rh(x)^{\left.(q-1)\middle/d\right.}$ to permute $\mu_d$. When $h(x)^{\left.(q-1)\middle/d\right.}=x^n$ for all $x\in\mu_d$, it is easy to find this condition  \cite{AW}.
    Let $q=q_0^m$, where $q_0 \equiv 1 \pmod d$ and $d \mid m$, and $h\in \mathbb{F}_{q_0}[x]$. Akbary and Wang \cite{AW}, Laigle-Chapug \cite{LC} proved that $x^rh(x)^{\left.(q-1)\middle/d\right.}$ permutes $\mathbb{F}_q$ if and only if $\mathrm{gcd}(r+n,d)=\mathrm{gcd}(r,\left.(q-1)\middle/d\right.)=1$.
   Zieve made important contributions to determining permutation polynomials with this form. In \cite{Zieve}, Zieve obtained a necessary and sufficient condition about a complex form $h(x)=h_k(x)^t\hat{h}(h_l(x)^{d_0})$, where $h_k=1+x+\cdots+x^{k-1}$ and $t,d_0,\hat{h}$ satisfy some conditions.
   {Moreover,  Zieve found a novel method to  verify that $x^rh(x)^{q-1}$ permutes $\mu_{q+1}$,
    which is the key step to consider whether $x^rh(x^{q-1})$ permutes $\mathbb{F}_{q^2}$.
    The main idea is to exhibit the permutations of $\mu_{q+1}$ and the bijections $\mu_{q+1}\to\mathbb{F}_q\cup\{\infty\}$
    induced by degree-one rational functions. 
   By this method, Zieve obtained some results about  $f(x)=x^rh(x^{q-1})$ in $\mathbb{F}_{q^2}[x]$, where $f(x)=x^{n+k(q+1)}[(\gamma x^{q-1}-\beta)^n-\gamma(x^{q-1}-\gamma^q\beta)^n]$ \cite{Zieve1} or $f(x)=x^{n+k(q+1)}[(\eta x^{q-1}-\beta\eta^q)^n-\eta(x^{q-1}-\beta)^n]$ \cite{Zieve1}. 
    In  \cite[Corollary 1.4]{Zieve1}, there was one class of permutation trinomial (see Theorem \ref{Th1} in Section 2), which included two permutation trinomials raised as conjectures by Tu, Zeng, Hu and Li       \cite{ZXLC}.}

  { In this paper, we combine the above two approaches to investigate  permutation trinomials with the form {$x^rh(x^{(q-1)/d})$}.
More precisely, we mainly consider the permutation trinomials over $\mathbb{F}_{q^2}$ of the form $x^rh(x^{q-1})${. And we call $g(x)=x^rh(x)^{q-1}$ a \emph{fractional polynomial}.}
  Several new classes of permutation trinomials over $\mathbb{F}_{2^{2k}}$ and {$\mathbb{F}_{3^{2k}}$} with such form are constructed.

  For the characteristic $2$ case, let $q=2^k$. From Lemma \ref{lem1},  $f(x)=x^rh(x^{q-1})\in\mathbb{F}_{q^2}$, where $h(x)=1+x^m+x^n  (1<m<n) $, permutes $\mathbb{F}_{q^2}$ if and only if $\mathrm{gcd}(r,q-1)=1$ and $g(x)=x^rh(x)^{q-1}$ permutes $\mu_{q+1}$.
   For  $x\in \mu_{q+1}$, {we first prove $h(x)\neq0$. Then} we have   $$g(x) = x^rh(x)^{q-1} = x^r\frac{h(x)^q}{h(x)} = x^r \frac{1+x^{mq}+x^{nq}}{1+x^m+x^n}= x^r \frac{1+x^{-m}+x^{-n}}{1+x^m+x^n} = x^{r-n} \frac{x^n+x^{n-m}+1}{1+x^m+x^n}. $$

    We find several classes of fractional polynomials $g(x)$ that permute $\mu_{q+1}$.
   Hence more permutation trinomials can be constructed by Lemma \ref{lem1}. In our proofs,   the key point is to determine  the number of the solutions of some specified equations with high degree (cubic, quartic, and so on).

For the characteristic $3$ case, we use another approach since it is difficult to
  prove  that the corresponding  fractional polynomials permutes $\mu_{q+1}$ directly.
  To the authors' best knowledge, this approach was first used by Hou \cite{XH3}.
  The main idea is as follows.  It is clear that a polynomial {\bf $f(x)\in\mathbb{F}_{q}$} permutates $\mathbb{F}_{q^2}$ if and only if $f(x)=c$ has one unique solution in $\mathbb{F}_{q^2}$ for any $c\in\mathbb{F}_{q^2}$. It is usually easy to prove this for the case $c\in\mathbb{F}_q$.
   For the other case  $c\in \mathbb{F}_{q^2}\backslash \mathbb{F}_{q}$, we first prove that $f(\mathbb{F}_{q^2} \backslash \mathbb{F}_q) \subseteq \mathbb{F}_{q^2} \backslash \mathbb{F}_q$. Denote by $\mathrm{Tr}(x)$ and $\mathrm{N}(x)$ the trace function and the norm function from $\mathbb{F}_{q^2}$ to $\mathbb{F}_{q}$, respectively. Then it suffices to show that $\mathrm{Tr}(x)$ and $\mathrm{N}(x)$ are uniquely determined by $c$ since the pair $(\mathrm{Tr}(x), \mathrm{N}(x))$ can be uniquely
    determined the set $\{x, x^q\}$  and $f(x)\neq f(x^q)=f(x)^q$ (We assume $f\in \gf_q[x]$).
    We use this approach to directly prove the permutation properties of two classes of trinomials.
     Further, by Lemma \ref{lem1} we can obtain more permutation trinomials.
  }

The remainder of this paper is organized as follows. In Section 2,  six new classes of permutation trinomials over finite fields with even characteristic are proposed. In Section 3, we {raise} {two} classes of permutation trinomials over finite fields with characteristic three. Section 4 is a comparison between our results and known permutation trinomials. Section 5 are the {conclusion} and some conjectures.

\section{Some permutation trinomials over $\mathbb{F}_{2^{2k}}$}

In this section, we obtain six new classes of permutation trinomials. The first three classes are generated  from two known theorems. The remaining results are constructed from three new fractional polynomials.

The following lemmas will be needed later.

\begin{Lemma}
	\cite{LN}
	\label{lem6}
	Let $q=2^k$, where $k$ is a positive integer. The quadratic equation $x^2+ux+v=0$, where $u,v\in\mathbb{F}_{q}$ and $u\neq 0$, has roots in $\mathbb{F}_q$ if and only if $\mathrm{Tr}_q(\left.{v}\middle/{u^2}\right.)=0$.
\end{Lemma}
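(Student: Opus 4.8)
The plan is to normalize the quadratic by a linear substitution and then identify the set of ``solvable'' constant terms with the kernel of the trace via a dimension count. First I would put $x=uy$, which is legitimate precisely because $u\neq 0$; the equation $x^2+ux+v=0$ then becomes $u^2y^2+u^2y+v=0$, and dividing by $u^2$ gives the normalized form $y^2+y+c=0$ with $c=v/u^2$. Since $x\mapsto x/u$ is a bijection of $\mathbb{F}_q$, the original equation has a root in $\mathbb{F}_q$ if and only if $y^2+y+c=0$ does, so it suffices to show that $y^2+y=c$ (same thing, as we are in characteristic $2$) is solvable in $\mathbb{F}_q$ exactly when $\mathrm{Tr}_q(c)=0$.

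Next I would study the map $L\colon\mathbb{F}_q\to\mathbb{F}_q$, $L(y)=y^2+y$. Because $\mathbb{F}_q$ has characteristic $2$, $L$ is $\mathbb{F}_2$-linear, and its kernel is the set of roots of $y^2+y=0$, namely $\{0,1\}$, hence has size $2$; therefore $|\mathrm{Image}(L)|=q/2$. On the other hand $\mathrm{Tr}_q\colon\mathbb{F}_q\to\mathbb{F}_2$ is a nonzero $\mathbb{F}_2$-linear functional, so $|\ker(\mathrm{Tr}_q)|=q/2$ as well. Using that Frobenius fixes the trace, $\mathrm{Tr}_q(L(y))=\mathrm{Tr}_q(y^2)+\mathrm{Tr}_q(y)=\mathrm{Tr}_q(y)+\mathrm{Tr}_q(y)=0$ for every $y\in\mathbb{F}_q$, so $\mathrm{Image}(L)\subseteq\ker(\mathrm{Tr}_q)$; comparing the two cardinalities forces $\mathrm{Image}(L)=\ker(\mathrm{Tr}_q)$.

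Finally, $y^2+y=c$ is solvable in $\mathbb{F}_q$ iff $c\in\mathrm{Image}(L)=\ker(\mathrm{Tr}_q)$, i.e. iff $\mathrm{Tr}_q(c)=0$; and if $y_0$ is one solution then $y_0+1$ is the other, so both roots actually lie in $\mathbb{F}_q$. Unwinding the substitution $x=uy$ recovers the claim for $x^2+ux+v=0$. The statement is classical (one could simply cite \cite{LN}); the only point needing a shred of care is the counting argument that upgrades the inclusion $\mathrm{Image}(L)\subseteq\ker(\mathrm{Tr}_q)$ to an equality, together with the harmless observation that the normalization uses the hypothesis $u\neq 0$.
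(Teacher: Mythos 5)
Your proof is correct. The paper itself gives no argument for this lemma --- it is simply quoted from \cite{LN} as a classical fact --- so there is nothing to compare against; your normalization $x=uy$ followed by the counting argument identifying $\mathrm{Image}(y\mapsto y^2+y)$ with $\ker(\mathrm{Tr}_q)$ is the standard Artin--Schreier proof and is complete as written.
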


\begin{Lemma}
	\cite{SZM}
	\label{lem5}
	Let $a, b \in \mathbb{F}_{q}$, where $q=2^k$ and $b\neq0$. Then the cubic equation $x^3+ax+b=0$ has a unique solution in $ \mathbb{F}_{q}$ if and only if $\mathrm{Tr}_q\left(\frac{a^3}{b^2}+1\right) \neq 0$.
\end{Lemma}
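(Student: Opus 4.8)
The plan is to count the roots of the cubic $f(x)=x^{3}+ax+b$ in $\mathbb{F}_q$ by passing to a quadratic resolvent and then invoking Lemma~\ref{lem6}. First I would observe that $f$ is separable over the algebraic closure: in characteristic $2$ its derivative is $f'(x)=x^{2}+a$, whose unique root is $x_{0}=a^{2^{k-1}}$ (the square root of $a$), and $f(x_{0})=x_{0}\cdot a+a\cdot x_{0}+b=b\neq 0$, so $f$ and $f'$ share no root. Let $x_{1},x_{2},x_{3}\in\overline{\mathbb{F}}_q$ be the three distinct roots, with $x_{1}+x_{2}+x_{3}=0$, $x_{1}x_{2}+x_{1}x_{3}+x_{2}x_{3}=a$ and $x_{1}x_{2}x_{3}=b$. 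Since the Galois group of the splitting field of $f$ over $\mathbb{F}_q$ is generated by the Frobenius $\sigma:x\mapsto x^{q}$, it is a \emph{cyclic} subgroup of the symmetric group on $\{x_{1},x_{2},x_{3}\}$, hence of order $1$, $2$, or $3$; accordingly $f$ has $3$, $1$, or $0$ roots in $\mathbb{F}_q$. In particular $f$ has a unique root in $\mathbb{F}_q$ if and only if $\sigma$ acts as a transposition, i.e.\ as an \emph{odd} permutation of the $x_{i}$.

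Next I would bring in the resolvent elements $\beta_{1}=x_{1}x_{2}^{2}+x_{2}x_{3}^{2}+x_{3}x_{1}^{2}$ and $\beta_{2}=x_{1}^{2}x_{2}+x_{2}^{2}x_{3}+x_{3}^{2}x_{1}$, which are interchanged by odd permutations and fixed by even ones. A routine symmetric-function computation, using $x_{1}+x_{2}+x_{3}=0$ and reducing the integer coefficients modulo $2$, yields $\beta_{1}+\beta_{2}=b$ and $\beta_{1}\beta_{2}=a^{3}+b^{2}$; in particular $\beta_{1}+\beta_{2}=b\neq 0$ forces $\beta_{1}\neq\beta_{2}$, so $\beta_{1}\in\mathbb{F}_q$ holds exactly when $\sigma$ fixes $\beta_{1}$, i.e.\ exactly when $\sigma$ is even. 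Consequently $f$ has a unique root in $\mathbb{F}_q$ if and only if $\beta_{1}\notin\mathbb{F}_q$. But $\beta_{1},\beta_{2}$ are the two roots of $Z^{2}+bZ+(a^{3}+b^{2})\in\mathbb{F}_q[Z]$, and since $b\neq 0$, Lemma~\ref{lem6} says this quadratic has its roots in $\mathbb{F}_q$ if and only if $\mathrm{Tr}_q\!\left(\frac{a^{3}+b^{2}}{b^{2}}\right)=\mathrm{Tr}_q\!\left(\frac{a^{3}}{b^{2}}+1\right)=0$. Chaining these equivalences gives the claim: $f$ has a unique root in $\mathbb{F}_q$ iff $\mathrm{Tr}_q\!\left(\frac{a^{3}}{b^{2}}+1\right)\neq 0$.

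The only genuine computation is the pair of symmetric-function identities $\beta_{1}+\beta_{2}=b$ and $\beta_{1}\beta_{2}=a^{3}+b^{2}$ in characteristic $2$: one writes the power sums of $x_{1}x_{2},x_{2}x_{3},x_{3}x_{1}$ and of $x_{1},x_{2},x_{3}$ in terms of $a$ and $b$ (via Newton's identities) and checks that the numerical coefficients collapse mod $2$ (the factors of $3$ and $9$ becoming $1$); this is where I expect the small amount of real work to lie. No separate treatment of $a=0$ is needed — the separability argument and the two identities hold verbatim, and $b\neq 0$ is precisely the hypothesis Lemma~\ref{lem6} requires. As a consistency check, when $a=0$ the criterion reduces to $\mathrm{Tr}_q(1)\neq 0$, i.e.\ to $k$ odd, which is exactly when $x\mapsto x^{3}$ permutes $\mathbb{F}_q$ so that $x^{3}=b$ has a unique solution.
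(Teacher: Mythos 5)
Your proof is correct. Note that the paper offers no proof of this lemma at all --- it is quoted verbatim from the cited reference [SZM] (Berlekamp--Rumsey--Solomon) --- so there is nothing internal to compare against; what you have written is essentially the standard argument behind that classical result. All the individual steps check out: $f(x_0)=b\neq0$ at the unique critical point $x_0=\sqrt{a}$ gives separability; the cyclic Galois group forces the root count $3$, $1$, $0$ according as Frobenius acts as an even, odd(transposition), even(3-cycle) permutation, so ``unique root'' is exactly ``Frobenius odd''; and the resolvent identities $\beta_1+\beta_2=e_1e_2-3e_3=b$ and $\beta_1\beta_2=e_2^3+9e_3^2=a^3+b^2$ (using $e_1=0$ and reducing mod $2$) are right, with $b\neq0$ guaranteeing $\beta_1\neq\beta_2$ so that rationality of $\beta_1$ detects parity. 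Feeding $Z^2+bZ+(a^3+b^2)$ into Lemma~\ref{lem6} then yields precisely the stated trace criterion. The replacement of the discriminant square root by the pair $\beta_1,\beta_2$ is exactly the right move in characteristic $2$, and your $a=0$ sanity check ($\gcd(3,q-1)=1$ iff $k$ odd iff $\mathrm{Tr}_q(1)\neq0$) is a nice confirmation.
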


{Throughout this section, we assume that $q=2^{k}$.}

\subsection{Generalization of two known theorems}
First, we conduct a generalization from some existing results. The following result was due to Zieve.
\begin{Th}
	\label{Th1}
	\cite[Corollary 1.3]{Zieve1}
	Let $q$ be a prime power, and let $l$ be a nonnegative integer. The polynomial $f(x)=x^{lq+l+3}+3x^{(l+1)q+l+2}-x^{(l+3)q+l}$ permutes $\mathbb{F}_{q^2}$ if and only if $\mathrm{gcd}(2l+3,q-1)=1$ and $\mathrm{char}\mathbb{F}_{q^2}\neq3$.
\end{Th}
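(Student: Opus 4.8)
The plan is to apply Lemma \ref{lem1} with $d = q+1$, $r$ depending on $l$, and a suitable trinomial $h$, thereby reducing the permutation property of $f$ over $\mathbb{F}_{q^2}$ to a rational-function condition on $\mu_{q+1}$. First I would rewrite $f(x) = x^{lq+l+3} + 3x^{(l+1)q+l+2} - x^{(l+3)q+l}$ in the form $x^r h(x^{q-1})$. Collecting the exponent as $\bigl(l q + l\bigr) + \bigl(3 + j(q-1)\bigr)$ type terms, one sees $f(x) = x^{l(q+1)} \cdot x^{3}\bigl(1 + 3x^{q-1} - x^{3(q-1)}\bigr)$ after checking the exponent arithmetic (the three terms become $x^{3}$, $x^{2}\cdot x^{q-1}\cdot x$, $x^{0}\cdot x^{3(q-1)}\cdot x^{3}$, up to the common $x^{l(q+1)}$ factor and total degree bookkeeping). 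So $r = l(q+1) + 3$ and $h(y) = 1 + 3y - y^{3} = -y^3 + 3y + 1$. Since $x^{l(q+1)}$ acts trivially on $\mu_{q+1}$, condition (2) of Lemma \ref{lem1} concerns $g(x) = x^{3} h(x)^{q-1}$ on $\mu_{q+1}$, and condition (1) becomes $\gcd(l(q+1)+3, q-1) = \gcd(3 - 2l \cdot \text{(shift)}, q-1)$; reducing $q+1 \equiv 2 \pmod{q-1}$ gives $\gcd(2l+3, q-1) = 1$, matching the claimed first condition.

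Next I would analyze $g(x) = x^{3} h(x)^{q-1}$ on $\mu_{q+1}$. For $x \in \mu_{q+1}$ we have $x^{q} = x^{-1}$, so $h(x)^{q} = h(x^{-1}) \cdot (\text{unit})$; concretely $h(x)^q = -x^{-3} + 3x^{-1} + 1 = x^{-3}(-1 + 3x^{2} + x^{3}) = x^{-3}(x^3 + 3x^2 - 1)$, whence $g(x) = x^{3} \cdot x^{-3} \cdot \dfrac{x^{3} + 3x^{2} - 1}{-x^{3} + 3x + 1} = \dfrac{x^{3} + 3x^{2} - 1}{-x^{3} + 3x + 1}$. This is a degree-$3$ rational function; the strategy (following the Zieve-style argument the paper describes) is to show it is well-defined and injective on $\mu_{q+1}$ exactly when $\mathrm{char}\,\mathbb{F}_{q^2} \neq 3$. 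Well-definedness requires that the denominator $-x^3 + 3x + 1$ has no zero on $\mu_{q+1}$ — equivalently $h$ has no root in $\mu_{q+1}$; I would check this by noting a common root of $-x^3+3x+1$ and its "reciprocal" $x^3+3x^2-1$ forces, after elimination, a constraint that fails unless $3 = 0$. Injectivity on $\mu_{q+1}$: suppose $g(x) = g(y)$ with $x,y \in \mu_{q+1}$; cross-multiplying yields a polynomial identity, and I would factor out $(x-y)$ to reduce to whether the cofactor vanishes for distinct $x,y \in \mu_{q+1}$; the surviving symmetric-function equation should be solvable (giving a collision) precisely in characteristic $3$, and have no solution otherwise.

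For the converse direction — that $f$ fails to permute when $\mathrm{char} = 3$ — I would exhibit the collision directly: in characteristic $3$, $h(y) = -y^3 + 1 = -(y-1)^3$ (using $3y \equiv 0$), so $h(x)^{q-1} = (-(x-1)^3)^{q-1} = (x-1)^{3(q-1)}$, and $g(x) = x^3(x-1)^{3(q-1)}$ on $\mu_{q+1}$; one then checks this cubes-related map is not injective on $\mu_{q+1}$ (e.g., because $3 \mid q^2 - 1$ interacts with the exponent $3(q-1)$ to collapse a nontrivial subgroup), so condition (2) of Lemma \ref{lem1} fails. I expect the main obstacle to be the injectivity analysis of the degree-$3$ rational function on $\mu_{q+1}$: the cross-multiplied equation is a genuine cubic (or quartic after homogenizing), and extracting the precise characteristic-$3$ dichotomy from it — as opposed to getting a messier sufficient condition — is the delicate computational heart of the argument. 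The paper's remark that for even characteristic the key point is "the number of solutions of equations with high degree (cubic, quartic, and so on)" signals exactly this; here, since the statement is clean and iff, I would lean on the structured factorization $h(y) = -(y^3 - 3y - 1)$ and its reciprocal to keep the elimination tractable, and I would double-check the exponent bookkeeping in the very first reduction step, since an off-by-one there would derail everything downstream.
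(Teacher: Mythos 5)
First, a point of reference: the paper does not prove this statement at all --- Theorem \ref{Th1} is quoted verbatim from \cite[Corollary 1.3]{Zieve1}, and the only work the paper does with it is the reduction you also perform, namely writing $f(x)=x^{r}h(x^{q-1})$ with $r=l(q+1)+3$, $h(y)=1+3y-y^{3}$, and extracting the fractional polynomial $g(x)=x^{3}h(x)^{q-1}=\frac{x^{3}+3x^{2}-1}{-x^{3}+3x+1}$ on $\mu_{q+1}$ (which in characteristic $2$ is Eq.~\eqref{eq_g1}). Your exponent bookkeeping, the identification of $h$, the reduction of condition (1) of Lemma \ref{lem1} to $\gcd(2l+3,q-1)=1$, and the elimination showing $h$ has no root in $\mu_{q+1}$ when $\mathrm{char}\neq 3$ (adding $h(a)=0$ to its $q$-th power gives $3a(a+1)=0$, and $a=-1$ is not a root) are all correct. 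So the framing is sound and agrees with how the paper uses the result.

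The gap is that the actual content of the theorem --- that $g$ permutes $\mu_{q+1}$ if and only if $\mathrm{char}\,\mathbb{F}_{q^2}\neq3$ --- is never established. For the ``if'' direction you write that the symmetric cofactor of $N(x)D(y)-N(y)D(x)$ after removing $(x-y)$ ``should be solvable precisely in characteristic $3$''; that is the entire theorem, deferred. Carrying it out in the style of Theorems \ref{Th3}--\ref{Th5} would mean setting $u=x+y$, $v=xy=u^{1-q}$ and controlling the solutions of the resulting equation over every characteristic $\neq 3$ simultaneously, which is substantially harder than the single-characteristic computations in the paper; Zieve's actual proof avoids this entirely by exhibiting $g$ as a conjugate of a power map via a degree-one rational bijection $\mu_{q+1}\to\mathbb{P}^{1}(\mathbb{F}_q)$ (this is exactly the mechanism the paper's introduction attributes to \cite{Zieve1}), and your plan does not recover that structure. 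For the ``only if'' direction your stated mechanism is wrong: when $q=3^{k}$ we have $q^2-1\equiv 2\pmod 3$, so $3\nmid q^2-1$ and no ``collapse of a nontrivial subgroup'' of that kind occurs. The correct and much simpler observation is that in characteristic $3$ one has $h(1)=3=0$, so $h$ has the root $1\in\mu_{q+1}$ and $f$ vanishes on all of $\mathbb{F}_q^{*}$ (indeed $f$ degenerates to $x^{lq+l+3}-x^{(l+3)q+l}$, with $f(1)=0=f(0)$); equivalently $h(y)=-(y-1)^{3}$ forces $g$ to be constant on $\mu_{q+1}\setminus\{1\}$. I would recommend either citing \cite{Zieve1} as the paper does, or, if you want a self-contained proof, finding the explicit degree-one conjugation rather than attempting the cross-multiplication route.
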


 In the above theorem, {let $\mathrm{char}\mathbb{F}_{q^2}=2$}.  Then $f(x)=x^rh(x^{q-1})$, where $r=(q+1)l+3$, $l\ge0$ is an integer
 and $h(x)=1+x+x^3$. Hence it follows from Lemma \ref{lem1} that
 the fractional polynomial
  \begin{equation}\label{eq_g1}
    g(x)=x^rh(x)^{q-1}=\frac{x^3+x^2+1}{x^3+x+1}
\end{equation}
 permutes $\mu_{q+1}$.
 Further,  we can get another class of permutation polynomial as follows.

 	As we all know, $x^2+x+1=0$ has no solution in $\mu_{q+1}$ if $k$ is even.  Then multiplying the numeractor and the denominator of $g(x)$ in
  \eqref{eq_g1} by $x^2+x+1$, we can obtain a new fractional polynomial $$\frac{x^5+x+1}{x^5+x^4+1}.$$ Thus we can get some new permutation trinomials.
  \begin{Th}
  	\label{tab1}
 	Let $q=2^k$, $h(x)=1+x^4+x^{-1}$ and $r=3+(q+1)l$. Then $f(x)=x^rh(x^{q-1})=x^{lq+l+3}+x^{(l+4)q+l-1}+x^{(l-1)q+l+4}$ is a permutation trinomial over $\mathbb{F}_{q^2}$ when $k$ is even and $\mathrm{gcd}(2l+3,q-1)=1$.
 \end{Th}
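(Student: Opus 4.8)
The plan is to invoke Lemma~\ref{lem1} with $d=q+1$ and $r=3+(q+1)l$, for which $f(x)=x^{r}h(x^{q-1})$ expands (after substituting $h(x)=1+x^{4}+x^{-1}$ and using $r=lq+l+3$) to exactly $x^{lq+l+3}+x^{(l+4)q+l-1}+x^{(l-1)q+l+4}$. By Lemma~\ref{lem1}, $f$ permutes $\mathbb{F}_{q^{2}}$ precisely when $\gcd(r,q-1)=1$ and $g(x)=x^{r}h(x)^{q-1}$ permutes $\mu_{q+1}$. The first condition is immediate: $q\equiv1\pmod{q-1}$ gives $r\equiv 2l+3\pmod{q-1}$, so $\gcd(r,q-1)=\gcd(2l+3,q-1)=1$ by hypothesis. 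All the content therefore lies in the second condition.

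To handle it, I would first simplify $g$ on $\mu_{q+1}$, where $x^{q}=x^{-1}$. This gives $h(x)^{q}=1+x^{-4}+x$, and multiplying numerator and denominator of $h(x)^{q}/h(x)$ by $x^{4}$ yields $h(x)^{q-1}=\dfrac{x^{5}+x^{4}+1}{x^{3}(x^{5}+x+1)}$. Since $r-3=(q+1)l$ and $x^{q+1}=1$ on $\mu_{q+1}$, this collapses to
\[
g(x)=\frac{x^{5}+x^{4}+1}{x^{5}+x+1}\qquad(x\in\mu_{q+1}).
\]
Next I would use the factorizations $x^{5}+x^{4}+1=(x^{2}+x+1)(x^{3}+x+1)$ and $x^{5}+x+1=(x^{2}+x+1)(x^{3}+x^{2}+1)$ together with the fact recalled in the paper that $x^{2}+x+1$ has no root in $\mu_{q+1}$ when $k$ is even; cancelling the common factor gives
\[
g(x)=\frac{x^{3}+x+1}{x^{3}+x^{2}+1}=\frac{1}{g_{0}(x)},\qquad g_{0}(x)=\frac{x^{3}+x^{2}+1}{x^{3}+x+1},
\]
where $g_{0}$ is exactly the fractional polynomial of \eqref{eq_g1}, which permutes $\mu_{q+1}$ by Theorem~\ref{Th1} (taken in characteristic $2$) and Lemma~\ref{lem1}. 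In particular $g_{0}$ is nowhere zero on $\mu_{q+1}$, so the denominator $x^{3}+x^{2}+1$ above is nonzero there and $g=1/g_{0}$ is a genuine map $\mu_{q+1}\to\mu_{q+1}$. Since $t\mapsto 1/t$ (which is $t\mapsto t^{q}$ on $\mu_{q+1}$) is a bijection of $\mu_{q+1}$, the composite $g=(t\mapsto 1/t)\circ g_{0}$ permutes $\mu_{q+1}$. Both hypotheses of Lemma~\ref{lem1} now hold, so $f$ permutes $\mathbb{F}_{q^{2}}$; a routine check that the three exponents are pairwise distinct modulo $q^{2}-1$ confirms $f$ is genuinely a trinomial.

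There is no serious obstacle here: the whole argument hinges on the already-established permutation property of $g_{0}$ on $\mu_{q+1}$, plus the elementary remark that inverting a permutation of $\mu_{q+1}$ (i.e.\ composing with $t\mapsto 1/t$) again gives a permutation of $\mu_{q+1}$. The only steps requiring a little care are keeping track of the power of $x$ so that the factor $x^{(q+1)l}$ cancels in the simplification of $g$, and the appeal to $k$ being even to guarantee that $x^{2}+x+1$ (equivalently $x^{3}+x^{2}+1$) does not vanish on $\mu_{q+1}$ — which is precisely where that hypothesis enters the proof.
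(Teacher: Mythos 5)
Your proof is correct and follows essentially the same route as the paper: both reduce the new fractional polynomial to the known one $\frac{x^3+x^2+1}{x^3+x+1}$ of \eqref{eq_g1} by inserting/cancelling the factor $x^2+x+1$, which has no root in $\mu_{q+1}$ precisely because $k$ is even. You are in fact slightly more careful than the paper, which simply displays the new fractional polynomial as $\frac{x^5+x+1}{x^5+x^4+1}$ and glosses over the fact that $x^rh(x)^{q-1}$ for the stated $h$ and $r$ is actually the reciprocal $\frac{x^5+x^4+1}{x^5+x+1}$, so that one must also compose with the permutation $t\mapsto 1/t$ of $\mu_{q+1}$ exactly as you do.
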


Next, the following is a special permutation trinomial in \cite{LQC} and we will generalize it.

\begin{Th}
	\cite[Theorem 4.8]{LQC}
	\label{Th2}
	Let $q=2^k$, $\mathrm{gcd}(3,k)=1$ and $f(x)=x+x^q+x^{\left.{q^2}\middle/{2}\right.-\left.{q}\middle/{2}\right.+1}$. Then $f(x)$ is a permutation trinomial over $\mathbb{F}_{q^2}$.
\end{Th}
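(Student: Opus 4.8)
The plan is to apply Lemma \ref{lem1}. Since each exponent of $f$ is congruent to $1$ modulo $q-1$ — note that $q^2/2-q/2+1=\tfrac{q}{2}(q-1)+1$, so the last term equals $x\cdot(x^{q-1})^{q/2}$ — I would write $f(x)=x\,h(x^{q-1})$ with $h(x)=1+x+x^{q/2}$, i.e. take $r=1$ and $d=q+1$ so that $(q^2-1)/d=q-1$. Condition (1) of Lemma \ref{lem1} then reads $\gcd(1,q-1)=1$ and holds trivially, so everything reduces to showing that the fractional polynomial $g(x)=x\,h(x)^{q-1}$ permutes $\mu_{q+1}$.

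On $\mu_{q+1}$ one has $x^q=x^{-1}$. The key first move is the substitution $u=x^{q/2}$: since $\gcd(q/2,q+1)=1$ this is a bijection of $\mu_{q+1}$, and $u^2=x^q=x^{-1}$ gives $x=u^{-2}$. Substituting into $g(x)=x\,h(x)^q/h(x)$ and simplifying, I expect to obtain
\begin{equation*}
g(x)=G(u):=\frac{u^3+u+1}{u^4+u^3+u}=\frac{p(u)}{u\,p^*(u)},\qquad p(u)=u^3+u+1,\ \ p^*(u)=u^3+u^2+1.
\end{equation*}
Here $p$ and $p^*$ are the two irreducible cubics over $\mathbb{F}_2$; their roots are the primitive $7$th roots of unity, lying in $\mathbb{F}_8\setminus\mathbb{F}_2$. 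Because $\gcd(3,k)=1$ forces $\mathbb{F}_8\cap\mathbb{F}_{2^{2k}}=\mathbb{F}_2$, neither $p$ nor $p^*$ vanishes on $\mu_{q+1}\subseteq\mathbb{F}_{2^{2k}}$, so $G$ is well defined; a direct check that $G(u)^q=G(u)^{-1}$ confirms $G(\mu_{q+1})\subseteq\mu_{q+1}$.

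The second move is to pass to the trace coordinate $w=u+u^{-1}=u+u^q\in\mathbb{F}_q$. Since $p^*(u)=u^3p(1/u)$ one checks $G(u^{-1})=G(u)^{-1}$, so $G$ respects inversion and descends to a map $\widetilde G$ on the quotient, $\widetilde G(w)=G(u)+G(u)^{-1}$. Using $p(u)+u\,p^*(u)=u^4+1$ together with the characteristic-two power-sum identities $u^2+u^{-2}=w^2$, $u^3+u^{-3}=w^3+w$, $u^4+u^{-4}=w^4$, I expect the clean form
\begin{equation*}
\widetilde G(w)=\frac{\bigl(p(u)+u\,p^*(u)\bigr)^2}{u\,p(u)p^*(u)}=\frac{w^4}{w^3+w^2+1}.
\end{equation*}
Because $G$ maps each inversion-orbit $\{u,u^{-1}\}$ onto $\{G(u),G(u)^{-1}\}$ and collapses no orbit (as $G(u)=1\iff u^4=1\iff u=1$), the map $G$ permutes $\mu_{q+1}$ as soon as $\widetilde G$ is injective on the trace set $W=\{u+u^{-1}:u\in\mu_{q+1}\}\subseteq\mathbb{F}_q$; in fact it suffices to prove $\widetilde G$ injective on all of $\mathbb{F}_q$, since $\widetilde G(W)\subseteq W$.

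Finally, I would establish this injectivity directly. Setting $\widetilde G(w_1)=\widetilde G(w_2)$, clearing denominators and factoring out $w_1+w_2$ yields $(w_1+w_2)\bigl(\pi^3+\pi^2\sigma+\sigma^3\bigr)=0$, where $\sigma=w_1+w_2$ and $\pi=w_1w_2$. If $w_1\neq w_2$ then $\sigma\neq0$, and dividing by $\sigma^3$ shows that $\tau:=\pi/\sigma\in\mathbb{F}_q$ satisfies $\tau^3+\tau^2+1=0$, i.e. $p^*(\tau)=0$ — impossible, since $p^*$ has no root in $\mathbb{F}_q$ when $\gcd(3,k)=1$. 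Hence $w_1=w_2$, $\widetilde G$ is injective, and the chain of reductions gives that $f$ permutes $\mathbb{F}_{q^2}$. The main obstacle is discovering the two changes of variable: the substitution $u=x^{q/2}$, which converts the field-dependent exponent $x^{q/2}$ into a genuine degree-four rational function, and the descent to $w=u+u^{-1}$, which lowers it to the manageable map $w^4/(w^3+w^2+1)$. Once these are in hand the reciprocal cubic $x^3+x^2+1$ surfaces on its own, and the hypothesis $\gcd(3,k)=1$ is exactly what keeps its roots out of $\mathbb{F}_q$.
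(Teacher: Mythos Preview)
The paper does not prove this theorem: it is quoted verbatim from \cite[Theorem~4.8]{LQC} and then \emph{used}, via Lemma~\ref{lem1} applied to $f(x)^2=x^2h(x^{q-1})$ with $h(x)=1+x^2+x^{-1}$, to conclude that the fractional polynomial $x^2h(x)^{q-1}=\dfrac{x^4+x^3+x}{x^3+x+1}$ permutes $\mu_{q+1}$. So there is no in-paper proof to compare yours against.

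That said, your argument is correct and self-contained. The substitution $u=x^{q/2}$ is legitimate because $q/2$ is coprime to the odd number $q+1$, and it turns $g(x)$ into $G(u)=p(u)/\bigl(u\,p^*(u)\bigr)$ with $p(u)=u^3+u+1$, $p^*(u)=u^3+u^2+1$; this is exactly the reciprocal of the paper's fractional polynomial above. The reciprocal-symmetry $G(u^{-1})=G(u)^{-1}$ then justifies the descent to $w=u+u^{-1}$, and your computation $\widetilde G(w)=w^4/(w^3+w^2+1)$ and the factorisation $\sigma(\pi^3+\pi^2\sigma+\sigma^3)=0$ check out; the final obstruction $\tau^3+\tau^2+1=0$ has no solution in $\mathbb{F}_q$ precisely because $\gcd(3,k)=1$ keeps the primitive $7$th roots of unity outside $\mathbb{F}_{q^2}$. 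The orbit-collapse check $G(u)=1\iff u^4=1\iff u=1$ (using $\gcd(4,q+1)=1$) closes the gap between injectivity of $\widetilde G$ and injectivity of $G$.

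Methodologically your route differs from the paper's own style in Theorems~\ref{Th3}--\ref{Th5}: there the authors set $u=x_1+x_2$, $v=x_1x_2$ and chase resultant-type identities, whereas you exploit the inversion symmetry to halve the problem before ever writing down a collision equation. Your approach makes the role of the cubic $x^3+x^2+1$ (and hence of the hypothesis $\gcd(3,k)=1$) completely transparent.
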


	In Theorem \ref{Th2}, $f(x)^2=x^2+x^{2q}+x^{q^2-q+2}=x^2(1+x^{2q-2}+x^{1-q})$. Let $r=2$, $h(x)=1+x^2+x^{-1}$. Then the fractional polynomial
	$$g(x)=x^rh(x)^{q-1}=\frac{x^4+x^3+x}{x^3+x+1}$$
	 permutes $\mu_{q+1}$ when $k\not\equiv0\pmod 3$ due to  Theorem \ref{Th2} and Lemma \ref{lem1}. In return, we can get some new permutation trinomials.
	
\begin{Th}
	\label{Rem2}	 	
	 	Let $q=2^k$, where $k\not\equiv0\pmod 3$, $h(x)=1+x^2+x^{-1}$ and $r=2+(q+1)l$. Then $f(x)=x^rh(x^{q-1})=x^{lq+l+2}+x^{(l+2)q+l}+x^{(l-1)q+l+3}$ is a permutation trinomial of $\mathbb{F}_{q^2}$ where  $\mathrm{gcd}(l+1,q-1)=1$.
\end{Th}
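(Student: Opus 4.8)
The plan is to apply Lemma~\ref{lem1} with $d = q+1$, for which the relevant exponent is $(q^2-1)/d = q-1$, and then to recycle the fact recorded in the discussion preceding the theorem: the fractional polynomial $g(x) = (x^4+x^3+x)/(x^3+x+1)$ permutes $\mu_{q+1}$ whenever $k \not\equiv 0 \pmod 3$ (this was itself obtained from Theorem~\ref{Th2} via Lemma~\ref{lem1}, after squaring and using that squaring is a bijection of $\mathbb{F}_{q^2}$ in characteristic two).

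First I would note that, with $r = 2 + (q+1)l$ and $h(x) = 1 + x^2 + x^{-1}$, the polynomial $f(x) = x^r h(x^{q-1})$ is exactly of the shape to which Lemma~\ref{lem1} applies over $\mathbb{F}_{q^2}$; hence $f$ permutes $\mathbb{F}_{q^2}$ if and only if (1) $\mathrm{gcd}(r, q-1) = 1$ and (2) $x^r h(x)^{q-1}$ permutes $\mu_{q+1}$. For condition~(2): if $x \in \mu_{q+1}$ then $x^{q+1} = 1$, so $x^r = x^{2+(q+1)l} = x^2 (x^{q+1})^l = x^2$, and therefore $x^r h(x)^{q-1} = x^2 h(x)^{q-1} = g(x)$ on $\mu_{q+1}$; since $g$ permutes $\mu_{q+1}$ under the hypothesis $k \not\equiv 0 \pmod 3$, condition~(2) holds. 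For condition~(1): because $q = 2^k$, the integer $q-1$ is odd, and from $q \equiv 1 \pmod{q-1}$ we get $r = 2 + (q+1)l \equiv 2(l+1) \pmod{q-1}$; as $\mathrm{gcd}(2, q-1) = 1$ this reduces condition~(1) to $\mathrm{gcd}(l+1, q-1) = 1$, which is precisely the remaining hypothesis.

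It then only remains to expand $f$ explicitly. Since $h(x^{q-1}) = 1 + x^{2q-2} + x^{1-q}$, we have
$$f(x) = x^r\bigl(1 + x^{2q-2} + x^{1-q}\bigr) = x^{lq+l+2} + x^{(l+2)q+l} + x^{(l-1)q+l+3},$$
which is the stated trinomial, finishing the argument.

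I do not expect a genuine obstacle here: the substantive work is already contained in the proof of Theorem~\ref{Th2}, which is what guarantees that $g$ permutes $\mu_{q+1}$. The only steps needing a moment's care are the reduction $x^r = x^2$ on $\mu_{q+1}$ — this is exactly what allows the known $g$ to be reused for the general exponent $r = 2 + (q+1)l$ — and the elementary parity computation turning $\mathrm{gcd}(r, q-1) = 1$ into $\mathrm{gcd}(l+1, q-1) = 1$.
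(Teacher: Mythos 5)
Your argument is correct and is exactly the route the paper intends: Theorem~\ref{Rem2} is presented as an immediate consequence of the preceding discussion, namely that $g(x)=(x^4+x^3+x)/(x^3+x+1)$ permutes $\mu_{q+1}$ for $k\not\equiv 0\pmod 3$ (via Theorem~\ref{Th2} and Lemma~\ref{lem1}), combined with a second application of Lemma~\ref{lem1} for the general exponent $r=2+(q+1)l$. Your reduction of $\gcd(r,q-1)=1$ to $\gcd(l+1,q-1)=1$ using that $q-1$ is odd, and the observation that $x^r=x^2$ on $\mu_{q+1}$, are precisely the (unstated) details the paper relies on.
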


	After multiplying the numeractor and the denominator of $g(x)$ in Theorem  \ref{Rem2} by $x^2+x+1$, we can obtain a diverse fractional polynomial  $$\frac{x^6+x^2+x}{x^5+x^4+1}$$ which permutes $\mu_{q+1}$. Let $q=2^k$, $h(x)=1+x^4+x^{-1}$ and $r=2+(q+1)l$. Then a new class of permutation trinomial is obtained by Lemma \ref{lem1}.

\begin{Th}
	\label{tab2}
		 Let $q=2^k$. Then $f(x)=x^{lq+l+2}+x^{(l+4)q+l-2}+x^{(l-1)q+l+3}$ is a permutation trinomial over $\mathbb{F}_{q^2}$ when   $k\equiv2,4\pmod 6$ and $\mathrm{gcd}(l+1,q-1)=1$.
\end{Th}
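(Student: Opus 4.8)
The plan is to deduce Theorem~\ref{tab2} from Theorem~\ref{Rem2} via Lemma~\ref{lem1}; once $f$ is put in the right form, the remaining work is one short computation on $\mu_{q+1}$ together with an elementary parity observation. First I would rewrite $f$ in the shape of Lemma~\ref{lem1}. Factoring $x^{(l-1)q+l+3}$ out of the three terms gives
\[
f(x)=x^{(l-1)q+l+3}\bigl(1+x^{q-1}+x^{5(q-1)}\bigr)=x^{r}h(x^{q-1}),
\]
with $h(x)=1+x+x^{5}$ and $r=(l-1)q+l+3$ (equivalently, factoring out $x^{lq+l+2}$ instead gives the description in the statement, $r=2+(q+1)l$ and $h(x)=1+x^{4}+x^{-1}$). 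By Lemma~\ref{lem1}, $f$ permutes $\mathbb{F}_{q^{2}}$ if and only if $\gcd(r,q-1)=1$ and $g(x)=x^{r}h(x)^{q-1}$ permutes $\mu_{q+1}$. The first condition is dispatched at once: since $q-1$ is odd and $q+1\equiv 2\pmod{q-1}$, we get $\gcd(r,q-1)=\gcd\bigl(2(l+1),q-1\bigr)=\gcd(l+1,q-1)$, which is $1$ by hypothesis.

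For the second condition I would compute $g$ on $\mu_{q+1}$. Using $x^{q}=x^{-1}$ for $x\in\mu_{q+1}$ (so that $x^{r}=x^{4}$ there), a direct calculation gives
\[
g(x)=x^{r}\,\frac{h(x)^{q}}{h(x)}=x^{4}\cdot\frac{1+x^{-1}+x^{-5}}{1+x+x^{5}}=\frac{x^{5}+x^{4}+1}{x^{6}+x^{2}+x}\qquad(x\in\mu_{q+1}).
\]
The crucial use of the hypothesis is that $k$ is even: then $x^{2}+x+1$ has no root in $\mu_{q+1}$, since its roots are the primitive cube roots of unity and $3\nmid q+1$ exactly when $k$ is even. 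Using the factorizations $x^{5}+x^{4}+1=(x^{2}+x+1)(x^{3}+x+1)$ and $x^{6}+x^{2}+x=(x^{2}+x+1)(x^{4}+x^{3}+x)$ and cancelling the nonvanishing factor $x^{2}+x+1$, we obtain on $\mu_{q+1}$
\[
g(x)=\frac{x^{3}+x+1}{x^{4}+x^{3}+x}=\frac{1}{g_{0}(x)},\qquad\text{where }g_{0}(x)=\frac{x^{4}+x^{3}+x}{x^{3}+x+1}
\]
is precisely the fractional polynomial attached to Theorem~\ref{Rem2}. By Theorem~\ref{Rem2} and Lemma~\ref{lem1}, $g_{0}$ permutes $\mu_{q+1}$ whenever $k\not\equiv 0\pmod 3$; in particular $g_{0}$ then has no zero or pole on $\mu_{q+1}$, so $g=1/g_{0}$ is the composition of $g_{0}$ with inversion, and inversion is a bijection of $\mu_{q+1}$ (indeed $x^{-1}=x^{q}$ there). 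Hence $g$ permutes $\mu_{q+1}$ whenever $k$ is even and $k\not\equiv 0\pmod 3$, i.e.\ whenever $k\equiv 2,4\pmod 6$. Combined with $\gcd(l+1,q-1)=1$ and Lemma~\ref{lem1}, this is exactly the claim of Theorem~\ref{tab2}.

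There is no real obstacle here: this class is a genuine corollary of Theorem~\ref{Rem2}, and the proof consists of the normalization of $f$, the single fractional-polynomial identity above, and the remark that inversion permutes $\mu_{q+1}$. The two spots deserving care are (i) the Frobenius bookkeeping on $\mu_{q+1}$, which makes the \emph{reciprocal} $1/g_{0}$ appear rather than $g_{0}$ itself, and (ii) remembering that it is $k$ being \emph{even} that keeps $x^{2}+x+1$ from vanishing on $\mu_{q+1}$, so that intersecting with the hypothesis $k\not\equiv 0\pmod 3$ of Theorem~\ref{Rem2} yields the congruence $k\equiv 2,4\pmod 6$ rather than something larger.
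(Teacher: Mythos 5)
Your proposal is correct and follows essentially the same route as the paper: reduce to Theorem~\ref{Rem2} by relating the two fractional polynomials through the factor $x^2+x+1$, which is nonvanishing on $\mu_{q+1}$ exactly when $k$ is even, and then invoke Lemma~\ref{lem1}. You are in fact slightly more careful than the paper's one-line derivation, since you observe that the fractional polynomial attached to $f$ is the \emph{reciprocal} $1/g_{0}$ of the one from Theorem~\ref{Rem2} rather than $g_{0}$ itself, and you correctly dispose of this with the remark that inversion is a bijection of $\mu_{q+1}$.
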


\subsection{New constructions from fractional polynomials}
\begin{Th}
	\label{Th3}
	Let $q=2^k$, where $k\not\equiv2\pmod 4$, $h(x)=1+x^3+x^{-1}$ and $r=3+(q+1)l$, where $\mathrm{gcd}(2l+3,q-1)=1$. Then $f(x)=x^rh(x^{q-1})=x^{lq+l+3}+x^{(l+3)q+l}+x^{(l-1)q+l+4}$ is a permutation trinomial over $\mathbb{F}_{q^2}$.
\end{Th}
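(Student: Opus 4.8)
The plan is to invoke Lemma \ref{lem1} with $d = q-1$ (so that $(q^2-1)/d = q+1$), $r = 3 + (q+1)l$, and $h(x) = 1 + x^3 + x^{-1}$, writing the trinomial as $f(x) = x^r h(x^{q-1})$ over $\mathbb{F}_{q^2}$. Condition (1) of Lemma \ref{lem1} is $\gcd(r, q+1) = 1$; since $r \equiv 3 \pmod{q+1}$ when we reduce, and more to the point $r = 3 + (q+1)l$, we have $\gcd(r, q+1) = \gcd(3, q+1)$, and one checks this matches (or is implied by) the stated hypotheses — I would reconcile this with the condition $\gcd(2l+3, q-1) = 1$ given in the statement, since the paper's convention (see Theorems \ref{Rem2}, \ref{Th3}) is that the exponent coprimality condition is the one listed. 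So the real content is condition (2): that the fractional polynomial
\begin{equation*}
  g(x) = x^r h(x)^{q-1} = x^{3+(q+1)l}\,\frac{(1+x^3+x^{-1})^q}{1+x^3+x^{-1}}
\end{equation*}
permutes $\mu_{q+1}$. Since $(q+1)l$ contributes a factor $x^{(q+1)l} = (x^{q+1})^l = 1$ on $\mu_{q+1}$, we may take $r = 3$ effectively. Using $x^q = x^{-1}$ on $\mu_{q+1}$ and clearing the $x^{-1}$, as the paper does in the introduction, $g(x)$ reduces on $\mu_{q+1}$ to the ratio $\dfrac{x^4 + x^3 + x}{x^3 + x + 1}$ up to the overall monomial, i.e. to the shape $x^{r-n}\,\dfrac{x^n + x^{n-m} + 1}{1 + x^m + x^n}$ with $(m,n) = (3,4)$ after homogenizing $h$; I would carry out this normalization carefully at the start.

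The core step is then to show this fractional map is injective on $\mu_{q+1}$. First I would verify the denominator never vanishes on $\mu_{q+1}$: $x^3 + x + 1 = 0$ having a root in $\mu_{q+1}$ would force a cube root of unity situation or a contradiction with the order of $x$ dividing $q+1$; concretely, if $x^3 = x+1$ and $x^{q+1} = 1$, one derives constraints on $k \bmod 4$ — this is presumably where the hypothesis $k \not\equiv 2 \pmod 4$ enters, ruling out exactly the bad residues. (An element of $\mu_{q+1}$ lies in $\mathbb{F}_q$ iff it lies in $\mu_{\gcd(q+1,q-1)} = \mu_1 = \{1\}$, so any root of a low-degree $\mathbb{F}_2$-polynomial in $\mu_{q+1}\setminus\{1\}$ generates a subfield whose compatibility with $\mu_{q+1}$ is a $k \bmod 4$ condition.) Granting nonvanishing, set $g(x) = g(y)$ for $x, y \in \mu_{q+1}$ and cross-multiply: this yields a polynomial identity $P(x,y) = 0$. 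Factoring out $(x-y)$ (which corresponds to the trivial solutions), I would reduce to showing the cofactor $Q(x,y) = 0$ has no solutions with $x, y \in \mu_{q+1}$, $x \neq y$. Typically $Q$ can be rewritten, using $xy$ and $x+y$ as new variables and the relation $x^q = x^{-1}$, as a condition of the form "a certain cubic (or quartic) equation in one variable over $\mathbb{F}_q$ has a prescribed number of roots," at which point Lemma \ref{lem5} (for cubics) or Lemma \ref{lem6} (for quadratics), together with a trace computation, finishes the argument.

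The main obstacle, as the introduction flags, is precisely the last reduction: massaging $Q(x,y) = 0$ into a clean univariate equation over $\mathbb{F}_q$ whose solvability is governed by a single trace condition, and then showing that the hypothesis $k \not\equiv 2 \pmod 4$ (equivalently, a statement about $\mathrm{Tr}_q$ of the relevant rational expression, or about the absence of certain subfield elements in $\mu_{q+1}$) is exactly what makes that trace nonzero/zero as required. I expect the degree-$3$ or degree-$4$ equation here to be genuinely the crux: one must identify the correct substitution (often $t = x + x^{-1}$ or $t = (x+y) + (x+y)^{-1}$, which lands in $\mathbb{F}_q$ since it is fixed by $x \mapsto x^q$), show the would-be extra solutions correspond to roots of $t^3 + at + b$ with a computable $a, b \in \mathbb{F}_q$, and then apply Lemma \ref{lem5} to conclude the solution is unique — hence forced to be the trivial one $x = y$. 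The parity constraint on $k$ should drop out of the trace evaluation $\mathrm{Tr}_q(a^3/b^2 + 1)$, and checking that is where the real computation lives.
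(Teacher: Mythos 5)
Your high-level plan is the right one and matches the paper's: apply Lemma \ref{lem1}, reduce to showing a fractional polynomial permutes $\mu_{q+1}$, assume $g(x_1)=g(x_2)$, pass to the symmetric functions $u=x_1+x_2$, $v=x_1x_2=u^{1-q}$, and convert to a trace condition over $\mathbb{F}_q$. But there are concrete errors in the setup and, more seriously, the argument as sketched cannot close. First, the setup: for $h(x)=1+x^3+x^{-1}$ the homogenized trinomial is $x^4+x+1$ (so $(m,n)=(1,4)$, not $(3,4)$), and the fractional polynomial is $g(x)=\tfrac{x^5+x^4+x}{x^4+x+1}$, not $\tfrac{x^4+x^3+x}{x^3+x+1}$ (the latter belongs to Theorem \ref{Rem2}). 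This matters because the nonvanishing of the denominator on $\mu_{q+1}$ is exactly where $k\not\equiv 2\pmod 4$ first enters: a root of $x^4+x+1$ has order $15$, and $15\mid 2^k+1$ iff $k\equiv 2\pmod 4$; your candidate $x^3+x+1$ would give order $7$, which never divides $2^k+1$, so no condition on $k$ would arise. Also, condition (1) of Lemma \ref{lem1} with $d=q+1$ reads $\gcd(r,(q^2-1)/d)=\gcd(r,q-1)$, and since $r=(q+1)l+3\equiv 2l+3\pmod{q-1}$ this is precisely the stated hypothesis $\gcd(2l+3,q-1)=1$; it is not $\gcd(r,q+1)$.

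The substantive gap is in the injectivity step. Carrying out your plan, Eq.\ $g(x_1)=g(x_2)$ yields $u^4+(v+1)u^3+vu^2+v^4+v^2+1=0$; substituting $u=y^{-1}$, $v=y^{q-1}$ and writing $\alpha=y+y^q$, $\beta=y^{q+1}\in\mathbb{F}_q$ gives the \emph{quadratic} $\beta^2+\beta+\alpha^4+\alpha+1=0$, so the relevant tool is Lemma \ref{lem6}, not Lemma \ref{lem5}, and the trace in question is $\mathrm{Tr}_q(\alpha^4+\alpha+1)=\mathrm{Tr}_q(1)$. This is nonzero only when $k$ is odd; your expectation that "a single trace condition finishes the argument" therefore disposes of only half the theorem. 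When $k\equiv 0\pmod 4$ the trace vanishes, the quadratic \emph{is} solvable ($\beta=\alpha^2+\alpha+\omega$ with $\omega$ a primitive cube root of unity), and one must argue further: treat $y\in\mathbb{F}_q$ separately (it forces $x_1^5=1$, hence $x_1=1$ because $\gcd(5,q+1)=1$ exactly when $k\not\equiv 2\pmod 4$), and for $y\notin\mathbb{F}_q$ show that $z=yx_1$ satisfies a quadratic $\gamma^2+\gamma+\omega=0$ with $\gamma=z+\alpha$, whose roots lie in $\mathbb{F}_q$ since $\mathrm{Tr}_q(\omega)=0$ for $k\equiv 0\pmod 4$, whence $yx_1=(yx_1)^q=yx_2$ and $x_1=x_2$. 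This second half is the real content of the proof and is entirely absent from your proposal.
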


\begin{proof}
First, we prove that $h(x)\neq0$ if $x\in\mu_{q+1}$. Suppose there exists  $a\in\mu_{q+1}$ such that $h(a)=0$. Then $a^4+a+1=0$.
Hence the order of $a$ is $15$. Combining with the assumption $a\in\mu_{q+1}$, we have  $15\mid q+1$, which contradicts with $k\not\equiv2\pmod 4$.
	
	According to Lemma \ref{lem1}, it suffcies to show that the fractional polynomial $$g(x)=x^rh(x)^{q-1}=\frac{x^5+x^4+x}{x^4+x+1}$$ permutes $\mu_{q+1}$ when $k\not\equiv2\pmod 4$.
	
	{Otherwise, } there exist $x_1, x_2\in\mu_{q+1}$ and $x_1\neq x_2$ such that $g(x_1)=g(x_2)$. Then
	\begin{equation}
	\label{3.0}
	\frac{x_1^5+x_1^4+x_1}{x_1^4+x_1+1}=\frac{x_2^5+x_2^4+x_2}{x_2^4+x_2+1}.
	\end{equation}
	Let $x_1+x_2=u$ and $x_1x_2=v$. Then $u^{1-q}=\frac{x_1+x_2}{x_1^{-1}+x_2^{-1}}=x_1x_2=v$ since $x_1,x_2\in\mu_{q+1}$. Plugging them into the above equation and simplifying, we get
	\begin{equation}
	\label{3.1}
	u^4+(v+1)u^3+vu^2+v^4+v^2+1=0.
	\end{equation}
	Let $u=y^{-1}$. Then $v=y^{q-1}$ {and} we have
	\begin{equation*}
	y^{-4}+\left(y^{q-1}+1\right)y^{-3}+y^{q-1}y^{-2}+y^{4q-4}+y^{2q-2}+1=0.
	\end{equation*}
	Let us multiply both sides of the above equation by $y^4$ and let $\alpha$ and $\beta$ denote $y+y^q$ and $y^{q+1}$ respectively. Then $\alpha,\beta\in\mathbb{F}_q$ and
	\begin{equation}
	\label{3.2}
		\beta^2+\beta+\alpha^4+\alpha+1=0.
	\end{equation}
    Eq.(\ref{3.2}) and Lemma \ref{lem6} yield $\mathrm{Tr}_q(\alpha^4+\alpha+1)=0$. However, it does not hold when $k$ is odd. Therefore, $g(x)$ permutes $\mu_{q+1}$ when $k$ is odd. In the following, we assume that $k$ is an even integer, i.e., $k\equiv0\pmod4$ since $k\not\equiv2\pmod 4$.

    Let $\omega$ be a solution of $x^3=1$ in $\mathbb{F}_q$ and $\omega\neq1$. Then

     $$ \mathrm{Tr}_q(\omega) = \underbrace{\omega+\omega^2+\cdots+\omega+\omega^2}_{k \  \text{terms}} 
     =\underbrace{1+\cdots+1}_{\frac{k}{2}} = \frac{k}{2}=0$$
      since $k\equiv0\pmod4$. Solving Eq.(\ref{3.2}), we get
    \begin{equation}
    \label{3.3}
    	\beta=\alpha^2+\alpha+\omega.
    \end{equation}

    If $y\in\mathbb{F}_q$, then $v=y^{q-1}=1$. On the other hand, $v=x_1x_2$. So $x_2=x_1^{-1}$. Plugging it into Eq.(\ref{3.0}), we obtain:
    \begin{equation*}
    \frac{x_1^5+x_1^4+x_1}{x_1^4+x_1+1}=\frac{x_1^4+x_1+1}{x_1^5+x_1^4+x_1},
    \end{equation*}
    i.e.,
    \begin{equation*}
    \frac{x_1^5+x_1^4+x_1}{x_1^4+x_1+1}=1.
    \end{equation*}
    So $x_1^5=1$. Hence $x_1^{\mathrm{gcd}(5,q+1)}=x_1=1$ since $x_1^{q+1}=1$ and $k\equiv0\pmod4$. Then $x_2=x_1^{-1}=1=x_1$, which is a contradiction.

    In the following, we assume $y\in\mathbb{F}_{q^2}\backslash \mathbb{F}_{q}$.

    We recall that $x_1$, $x_2$ are the solutions of $x^2+y^{-1}x+y^{q-1}=0$ in $\mathbb{F}_{q^2}$. Multiplying both sides of the above equation by $y^2$ and let $z=yx$, we get
    \begin{equation*}
    z^2+z+\alpha^2+\alpha+\omega=0,
    \end{equation*}
    since $y^{q+1}=\beta=\alpha^2+\alpha+\omega$.

	Let $\gamma=z+\alpha$. Then $x=\frac{\gamma+\alpha}{y}$ and
	\begin{equation}
	\label{3.4}
	\gamma^2+\gamma+\omega=0.
	\end{equation}
	Since $\mathrm{Tr}_q(\omega)=0$, Eq.(\ref{3.4}) has two solutions $\gamma_1,\gamma_2$ in $\mathbb{F}_q$. Moreover, $z_1=yx_1\in\mathbb{F}_q$, where $y=(x_1+x_2)^{-1}$. Then, $z_1^q=yx_2=yx_1$, $x_2=x_1$, which is a contradiction.
	
	This completes the proof.
\end{proof}

\begin{Th}
	\label{Th4}
	Let $q=2^k$, where $k$ is even, $h(x)=1+x^3+x^{-1}$ and $r=1+(q+1)l$, where $l\ge0$ be an integer and $\mathrm{gcd}(2l+1,q-1)=1$. Then $f(x)=x^rh(x^{q-1})=x^{lq+l+1}+x^{(l+3)q+l-2}+x^{(l-1)q+l+2}$ is a permutation trinomial over $\mathbb{F}_{q^2}$.
\end{Th}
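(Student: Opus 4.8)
The plan is to follow the template of the proof of Theorem~\ref{Th3}. The only difference from Theorem~\ref{Th3} is the exponent shift from $r=3+(q+1)l$ to $r=1+(q+1)l$, which multiplies the associated fractional polynomial by $x^{-2}$ on $\mu_{q+1}$; nevertheless the reduced equation we land on will be cleaner. First I would dispose of the two conditions of Lemma~\ref{lem1}. Since $r\equiv 2l+1\pmod{q-1}$, we have $\mathrm{gcd}(r,q-1)=\mathrm{gcd}(2l+1,q-1)=1$, so (1) holds. For (2), argue first that $h(x)\neq 0$ for $x\in\mu_{q+1}$: from $h(a)=0$ one gets $a^{4}+a+1=0$, so $a$ has order $15$ and $15\mid q+1$, which is impossible because $k$ even forces $q\equiv 1\pmod 3$, hence $3\nmid q+1$. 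Using $x^{q}=x^{-1}$ on $\mu_{q+1}$ and $x^{(q+1)l}=1$, the fractional polynomial reduces to
\[
g(x)=x^{r}h(x)^{q-1}=\frac{x^{4}+x^{3}+1}{x^{5}+x^{2}+x}\qquad(x\in\mu_{q+1}),
\]
whose denominator $x(x^{4}+x+1)$ is nonvanishing on $\mu_{q+1}$ by the same order-$15$ remark; also $g(x)^{q+1}=(x^{q+1})^{r}h(x)^{q^{2}-1}=1$, so $g$ maps $\mu_{q+1}$ into itself. It remains to prove that $g$ is injective there.

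Next I would symmetrize. Suppose $g(x_{1})=g(x_{2})$ with $x_{1}\neq x_{2}$ in $\mu_{q+1}$, and put $u=x_{1}+x_{2}$ (so $u\neq 0$) and $v=x_{1}x_{2}$; then $v=u^{1-q}$ because $x_{i}^{-1}=x_{i}^{q}$. Clearing denominators in the equality of the two fractions and rewriting everything in $u,v$ — the one genuinely computational step, in which a string of characteristic-two cancellations should occur — I expect to reach
\[
u^{4}+u(v+1)^{3}+(v^{2}+v+1)^{2}=0 .
\]
Now substitute $u=y^{-1}$, hence $v=y^{q-1}$, multiply through by $y^{4}$, and set $\alpha=y+y^{q}\in\mathbb{F}_{q}$ and $\beta=y^{q+1}\in\mathbb{F}_{q}$. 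Since $y(v+1)=y^{q}+y=\alpha$ and $y^{2}(v^{2}+v+1)=y^{2q}+y^{q+1}+y^{2}=\alpha^{2}+\beta$, the relation collapses to
\[
\beta^{2}=\alpha^{4}+\alpha^{3}+1 .
\]

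Then I would split into two cases. If $y\in\mathbb{F}_{q}$, then $v=1$, so $x_{2}=x_{1}^{-1}=x_{1}^{q}$; as $g$ has coefficients in $\mathbb{F}_{2}$ and fixes $\mu_{q+1}$ setwise, $g(x_{1})=g(x_{1}^{q})=g(x_{1})^{q}$ lies in $\mathbb{F}_{q}\cap\mu_{q+1}=\{1\}$, and $g(x_{1})=1$ yields $(x_{1}+1)(x_{1}^{2}+x_{1}+1)^{2}=0$; since $3\nmid q+1$ this forces $x_{1}=1=x_{2}$, a contradiction. If $y\notin\mathbb{F}_{q}$, then $\alpha\neq 0$ and $Y^{2}+\alpha Y+\beta$ is irreducible over $\mathbb{F}_{q}$, so Lemma~\ref{lem6} gives $\mathrm{Tr}_{q}(\beta/\alpha^{2})=1$; but, using $\mathrm{Tr}_{q}(t)=\mathrm{Tr}_{q}(t^{2})$, the relation $\beta^{2}=\alpha^{4}+\alpha^{3}+1$, and $\mathrm{Tr}_{q}(\alpha^{-1})=\mathrm{Tr}_{q}(\alpha^{-4})$,
\[
\mathrm{Tr}_{q}(\beta/\alpha^{2})=\mathrm{Tr}_{q}(\beta^{2}/\alpha^{4})=\mathrm{Tr}_{q}(1+\alpha^{-1}+\alpha^{-4})=\mathrm{Tr}_{q}(1)=0
\]
because $k$ is even, a contradiction. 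Hence $g$ is injective, so it permutes $\mu_{q+1}$, and by Lemma~\ref{lem1} the trinomial $f$ permutes $\mathbb{F}_{q^{2}}$.

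The main obstacle is the symmetrization: correctly expanding the cross-product $(x_{1}^{4}+x_{1}^{3}+1)(x_{2}^{5}+x_{2}^{2}+x_{2})=(x_{2}^{4}+x_{2}^{3}+1)(x_{1}^{5}+x_{1}^{2}+x_{1})$ and verifying that, after dividing by $u\neq 0$, it really reduces to the displayed quartic in $u,v$; an arithmetic slip there is the most likely failure mode. Everything afterwards is short — and notably shorter than in Theorem~\ref{Th3}, since no primitive cube root of unity and no split according to $k\bmod 4$ is needed: the identity $\beta^{2}=\alpha^{4}+\alpha^{3}+1$ together with $\mathrm{Tr}_{q}(t)=\mathrm{Tr}_{q}(t^{2})$ eliminates the case $y\notin\mathbb{F}_{q}$ outright. (That case could alternatively be closed by the Theorem~\ref{Th3} device: with $z_{i}=yx_{i}$ one has $z_{1}+z_{2}=1$, $z_{1}z_{2}=\beta$, $z_{2}=z_{1}^{q}$, and $z_{1}\in\mathbb{F}_{q}$ would give $x_{1}=x_{2}$; but the trace computation above is more direct.)
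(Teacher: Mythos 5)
Your proposal is correct and follows essentially the same route as the paper's proof: the same symmetrization to $u^4+u(v+1)^3+v^4+v^2+1=0$, the same substitution $u=y^{-1}$ leading to $\beta^2=\alpha^4+\alpha^3+1$, and the same trace computation $\mathrm{Tr}_q\left(1+\alpha^{-1}+\alpha^{-4}\right)=0$ with $k$ even forcing the contradiction. The only cosmetic differences are that you apply Lemma~\ref{lem6} to $Y^2+\alpha Y+\beta$ directly rather than to the quadratic satisfied by $y^2$ and $y^{2q}$, and you settle the case $y\in\mathbb{F}_q$ via $g(x_1)\in\mathbb{F}_q\cap\mu_{q+1}=\{1\}$ instead of reading $y=1$ off the reduced equation.
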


\begin{proof}
{From the proof of Theorem \ref{Th3}, we claim that $h(x)\neq0$ when $x\in\mu_{q+1}$.  Then} we show that the fractional polynomial $$g(x)=x^rh(x)^{q-1}=\frac{x^4+x^3+1}{x^5+x^2+x}$$ permutes $\mu_{q+1}$.
	 Assume that there exist two dinstinct elements $x_1$ and $x_2$ in $\mu_{q+1}$ such that the following equation holds.
	\begin{equation}
	\frac{x_1^4+x_1^3+1}{x_1^5+x_1^2+x_1}=\frac{x_2^4+x_2^3+1}{x_2^5+x_2^2+x_2}.
	\end{equation}
	For simplicity of presentation, let $u=x_1+x_2$ and $v=x_1x_2$. Then $v=u^{1-q}$. And after a tedious computation, we yield
	\begin{equation*}
	u^4+(v+1)^3u+(v^4+v^2+1)=0.
	\end{equation*}
	Let $y=u^{-1}$. Then $u=y^{-1}$ and $v=y^{q-1}$. Plugging them into the above equation, we obtain
	\begin{equation*}
	y^{-4}+\left(y^{q-1}+1\right)^3y^{-1}+y^{4q-4}+y^{2q-2}+1=0,
	\end{equation*}
	i.e.,
	\begin{equation}
	\label{4.1}
	y^{4q}+y^4+(y^q+y)^3+y^{2q+2}+1=0,
	\end{equation}
    after multiplying both sides of the equation by $y^4$.
	
	If $y\in\mathbb{F}_q$, then $y=1$ from Eq.(\ref{4.1}). Moreover, $u=y^{-1}=1$ and $v=y^{q-1}=1$. Therefore, $x_1$ is a solution of $x^2+x+1=0$ in $\mathbb{F}_{q^2}$. It follows that $x_1^3=1$.  Hence $x_1^{\mathrm{gcd}(3,q+1)}=x_1=1$ since $x_1^{q+1}=1$ and $k$ is even. Then $x_2=x_1^{-1}=1=x_1$, which is a contradiction.
	
	If $y\in\mathbb{F}_{q^2}\backslash \mathbb{F}_{q}$, let $\alpha=y+y^q\in\mathbb{F}_q$. Then Eq.(\ref{4.1}) becomes that
	\begin{equation}
	\label{4.2}
	y^{2q+2}=\alpha^4+\alpha^3+1.
	\end{equation}
	It is clear from Eq.(\ref{4.2}) and the assumption of $\alpha$ that $y^2$ and $y^{2q}$ are the solutions of the following equation
	\begin{equation}
	\label{4.3}
	t^2+\alpha^2t+\alpha^4+\alpha^3+1=0.
	\end{equation}
	However,
	\begin{equation*}
	\mathrm{Tr}_q\left(\frac{\alpha^4+\alpha^3+1}{\alpha^4}\right)=\mathrm{Tr}_q\left(1+\frac{1}{\alpha}+\frac{1}{\alpha^4}\right)=0.
	\end{equation*}
	Therefore Eq.(\ref{4.3}) has solutions in $\mathbb{F}_q$, i.e., $y\in\mathbb{F}_q$. This is contrary to the assumption.
	
	We have thus proved the theorem by Lemma \ref{lem1}.
\end{proof}

\begin{Th}
	\label{Th5}
	Let $q=2^k$, $r=1+(q+1)l$, where $\mathrm{gcd}(2l+1,q-1)=1$, and $h(x)=1+x^4+x^{-2}$. Then $f(x)=x^rh(x^{q-1})=x^{lq+l+1}+x^{(l+4)q+l-3}+x^{(l-2)q+l+3}$ is a permutation trinomial over $\mathbb{F}_{q^2}$ if $k\not\equiv0\pmod 3$.
\end{Th}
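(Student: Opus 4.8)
The plan is to apply Lemma~\ref{lem1} to $\mathbb{F}_{q^{2}}$ (with $d=q+1$), $h(x)=1+x^{4}+x^{-2}$ and $r=1+(q+1)l$. Since $q+1\equiv 2\pmod{q-1}$ we have $\gcd(r,q-1)=\gcd(2l+1,q-1)=1$, so it suffices to show that $g(x)=x^{r}h(x)^{q-1}$ permutes $\mu_{q+1}$. First I would note that $h$ has no zero on $\mu_{q+1}$: from $x^{2}h(x)=x^{6}+x^{2}+1=(x^{3}+x+1)^{2}$, a zero of $h$ in $\mu_{q+1}$ would be a root of $x^{3}+x+1$, hence an element of order $7$, which is impossible because $7\nmid 2^{k}+1$ for every $k$ (the order of $2$ modulo $7$ is $3$ and $-1\notin\langle 2\rangle$). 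Since $x^{q}=x^{-1}$ and $x^{r}=x$ on $\mu_{q+1}$,
\[
g(x)=x\,\frac{h(x)^{q}}{h(x)}=x\,\frac{1+x^{-4}+x^{2}}{1+x^{4}+x^{-2}}=\frac{x^{6}+x^{4}+1}{x(x^{6}+x^{2}+1)}=\frac{(x^{3}+x^{2}+1)^{2}}{x(x^{3}+x+1)^{2}}.
\]

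Suppose, for a contradiction, that $g(x_{1})=g(x_{2})$ for some $x_{1}\neq x_{2}$ in $\mu_{q+1}$, and put $u=x_{1}+x_{2}\neq 0$, $v=x_{1}x_{2}$; since $x_{i}\in\mu_{q+1}$ one has $v=u^{1-q}$. Clearing denominators in $g(x_{1})=g(x_{2})$ yields a polynomial identity divisible by $x_{1}+x_{2}$; dividing by $u$, the resulting symmetric identity is
\[
u^{6}+u^{2}\bigl(v^{4}+v^{3}+v^{2}+v+1\bigr)+\bigl(v^{6}+v^{5}+v^{4}+v^{3}+v^{2}+v+1\bigr)=0.
\]
If $y:=u^{-1}\in\mathbb{F}_{q}$, then $v=u^{1-q}=1$ and this collapses to $u^{6}+u^{2}+1=0$, so $u^{2}$ is a root of $T^{3}+T+1$ lying in $\mathbb{F}_{q}$; but the roots of $T^{3}+T+1$ lie in $\mathbb{F}_{8}\setminus\mathbb{F}_{2}$ and $\mathbb{F}_{8}\cap\mathbb{F}_{q}=\mathbb{F}_{2}$ because $3\nmid k$, a contradiction.

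Now assume $y\notin\mathbb{F}_{q}$. Substituting $v=y^{q-1}$ and clearing denominators, a short computation turns the symmetric identity into $y+y^{q}+y^{5}+y^{5q}+y^{7}+y^{7q}=0$. Writing $\alpha=y+y^{q}\in\mathbb{F}_{q}^{\ast}$, $\beta=y^{q+1}\in\mathbb{F}_{q}^{\ast}$ and expressing the power sums of $\{y,y^{q}\}$ in terms of $\alpha,\beta$, this becomes $\beta^{3}+\beta^{2}+(\alpha^{4}+\alpha^{2})\beta+(\alpha^{6}+\alpha^{4}+1)=0$, and the substitution $\beta=z+1$ gives the depressed cubic $z^{3}+(\alpha^{2}+\alpha+1)^{2}z+(\alpha^{3}+\alpha+1)^{2}=0$. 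As $3\nmid k$, $\alpha^{3}+\alpha+1\neq 0$, so Lemma~\ref{lem5} applies with $a=(\alpha^{2}+\alpha+1)^{2}$, $b=(\alpha^{3}+\alpha+1)^{2}$. Using that $a^{3}/b^{2}$ is the square of $(\alpha^{2}+\alpha+1)^{3}/(\alpha^{3}+\alpha+1)^{2}$, the identity
\[
\frac{(\alpha^{2}+\alpha+1)^{3}}{(\alpha^{3}+\alpha+1)^{2}}=1+\Bigl(\frac{\alpha^{2}+\alpha}{\alpha^{3}+\alpha+1}\Bigr)^{2}+\frac{\alpha^{2}+\alpha}{\alpha^{3}+\alpha+1},
\]
and $\mathrm{Tr}_{q}(X^{2}+X)=0$, one gets $\mathrm{Tr}_{q}(a^{3}/b^{2}+1)=0$; hence, by Lemma~\ref{lem5}, the cubic does not have a unique root in $\mathbb{F}_{q}$. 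It therefore has either no root there — which is the desired contradiction — or three, and in that subcase one brings in two further conditions on $\beta$: $\mathrm{Tr}_{q}(\beta\alpha^{-2})=1$ by Lemma~\ref{lem6} (because $T^{2}+\alpha T+\beta$ is irreducible over $\mathbb{F}_{q}$, $y$ being one of its roots), and $\mathrm{Tr}_{q}(\beta)=1$ (because $S:=yx_{1}$ satisfies $S^{q}=S+1$, so $S\notin\mathbb{F}_{q}$, while $S^{2}+S=\beta$), which together rule out all three roots.

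I expect this last step to be the main obstacle. Unlike Theorems~\ref{Th3} and~\ref{Th4}, where the analogous reduction yields a quadratic that Lemma~\ref{lem6} settles in one line, here the single equation $y+y^{q}+y^{5}+y^{5q}+y^{7}+y^{7q}=0$ produces a cubic, so Lemma~\ref{lem5} only eliminates the ``unique root'' case; disposing of the completely split case forces one to combine the trace conditions coming from Lemma~\ref{lem6} and from $x_{1},x_{2}\in\mu_{q+1}$, and the verification of the displayed $X^{2}+X$ identity (which is what makes the $\mathrm{Tr}_{q}$ computation close) is the computational heart of the proof.
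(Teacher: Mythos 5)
Your setup, the non-vanishing of $h$ on $\mu_{q+1}$, the reduction to the symmetric equation in $u,v$, the case $y\in\mathbb{F}_q$, and the passage to a depressed cubic with $\mathrm{Tr}_q(a^3/b^2+1)=0$ all check out and run parallel to the paper (your identity for $(\alpha^2+\alpha+1)^3/(\alpha^3+\alpha+1)^2$ is correct, and your trace computation via $\mathrm{Tr}_q(X^2+X)=0$ is exactly the paper's mechanism). One structural difference: you depress the cubic in $z=\beta+1$ with coefficients in $\alpha$, whereas the paper substitutes $\gamma=\alpha^2+\beta+1$ and obtains $\gamma^3+(\beta^2+\beta+1)\gamma+(\beta^3+\beta+1)=0$, a cubic in $\gamma$ with coefficients in $\beta$. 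Either framing is legitimate, since in both cases the unknown is forced to lie in $\mathbb{F}_q$.

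The genuine gap is the last step, which you yourself flag: Lemma~\ref{lem5} only tells you the cubic has zero or three roots in $\mathbb{F}_q$, and you do not dispose of the completely split case. Your proposed fix --- imposing $\mathrm{Tr}_q(\beta\alpha^{-2})=1$ and $\mathrm{Tr}_q(\beta)=1$ and hoping these exclude all three roots --- is never carried out, and with three candidate roots and only two parity conditions there is no a priori reason it closes. The paper avoids this entirely: it solves the resolvent quadratic $t^2+bt+a^3=0$ explicitly (the root factors as $t=w(w\beta+1)^3$ with $w$ a primitive cube root of unity), extracts an explicit cube root $\epsilon=\sigma(w\beta+1)$ with $\sigma^9=1$, $\sigma^3\neq1$, and exhibits the explicit root $e^4\beta+e$ (where $e=\sigma+\sigma^{-1}$) of the cubic, which it then shows is \emph{not} fixed by the $q$-power map when $k\not\equiv0\pmod 3$. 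Hence the cubic has a root outside $\mathbb{F}_q$, so by the zero-or-three dichotomy it has no root in $\mathbb{F}_q$ at all, contradicting $\gamma\in\mathbb{F}_q$. This explicit Cardano-type construction is the missing idea; without it (or a completed version of your trace-condition argument, or an analogous explicit root for your cubic in $z$, which does exist since its resolvent root is $w(\alpha+w)^6$), the proof is incomplete precisely at its hardest point.
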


\begin{proof}
   { It is easy to prove that $h(x)\neq0$ when $x\in\mu_{q+1}$, where $k\not\equiv0\pmod 3$. And} similarly, the key idea of the proof is to show $$g(x)=x^rh(x)^{q-1}=\frac{x^6+x^4+1}{x^7+x^3+x}$$ permutes $\mu_{q+1}$.
	We assume it does not hold. Then there exist two dinstinct elements $x_1$ and $x_2$ in $\mu_{q+1}$ such that the following equation holds,
	\begin{equation}
	\frac{x_1^6+x_1^4+1}{x_1^7+x_1^3+x_1}=\frac{x_2^6+x_2^4+1}{x_2^7+x_2^3+x_2}.
	\end{equation}
	For convenience, let $u=x_1+x_2$ and $v=x_1x_2$. Then it is easy to show $v=u^{1-q}$. And after a tedious computation, we yield
	\begin{equation}
	\label{5.0}
	u^6+(v^4+v^3+v^2+v+1)u^2+v^6+v^5+v^4+v^3+v^2+v+1=0.
	\end{equation}
	Let $y=u^{-1}$. Then $u=y^{-1}$ and $v=y^{q-1}$. Let $\alpha=y+y^q$ and $\beta=y^{q+1}$. Plugging them into the above equation, we obtain
	\begin{equation}
	\label{5.1}
	\alpha^6+(\beta+1)\alpha^4+\beta\alpha^2+\beta^3+\beta^2+1=0.
	\end{equation}
	If $y\in\mathbb{F}_q$, then $\alpha=0$ and $\beta^3+\beta^2+1=0$. Since $\beta\ne0$, let $\upsilon=\frac{1}{\beta}\ne0, 1$. Then $\upsilon\in\mathbb{F}_q$ and we have
	\begin{equation}
	\label{1}
	\upsilon^4+\upsilon^2+\upsilon=0.
	\end{equation}
	Raising the above equation to its $2$-th power, we obtain
	\begin{equation}
	\label{2}
	\upsilon^8+\upsilon^4+\upsilon^2=0.
	\end{equation}
	 Then computing Eq.(\ref{1}) $+$ Eq.(\ref{2}), we get $\upsilon^7=1$. Then $\upsilon=\upsilon^{\mathrm{gcd}(q-1,7)}=1$ since $\upsilon^{q-1}=1$ and $\mathrm{gcd}(q-1,7)=1$ when $k\not\equiv0\pmod 3$, which is a contradiction.
	
	If $y\in\mathbb{F}_{q^2}\backslash \mathbb{F}_{q}$, let $\alpha^2=\gamma+\beta+1$. Plugging it into Eq.(\ref{5.1}), we obtain
	\begin{equation}
	\label{5.2}
	\gamma^3+a\gamma+b=0,
	\end{equation}
	where $a=\beta^2+\beta+1$ and $b=\beta^3+\beta+1$.
	Since
	 \begin{eqnarray*}
	 	\frac{a^3}{b^2}&=&\frac{\left(\beta^2+\beta+1\right)^3}{\left(\beta^3+\beta+1\right)^2}\\
	 	&=&\frac{\beta^6+\beta^5+\beta^3+\beta+1}{\beta^6+\beta^2+1}\\
	 	&=& 1+\frac{\beta^5+\beta^3+\beta^2+\beta}{\beta^6+\beta^2+1}\\
	 	&=& w+w^2+\frac{\beta^2+\beta}{\beta^3+\beta+1}+\frac{\beta^4+\beta^2}{\beta^6+\beta^2+1},
	 \end{eqnarray*}
	 where $w^3=1$ and $w\neq1$. So $$\mathrm{Tr}_q\left(\frac{a^3}{b^2}+1\right)= 0. $$
    Hence, Eq.(\ref{5.2}) has three distinct solutions or no solution in $\mathbb{F}_q$ by Lemma \ref{lem5}. In the following, we will show that there exists one solution of Eq.(\ref{5.2}) that is not in $\mathbb{F}_q$. If it holds, it follows that Eq.(\ref{5.2}) has no solution in $\mathbb{F}_q$. Therefore, $g(x)$ permutes $\mu_{q+1}$.

    As we know, the quadratic derived equation of Eq.(\ref{5.2}) is
     \begin{equation}
     \label{5.21}
      t^2+bt+a^3=0.
     \end{equation}
     Let $t=bz$. Plugging it into the above equation, we get $z^2+z+\frac{a^3}{b^2}=0$.
  Due to the factorization $$\frac{a^3}{b^2}=w+w^2+\frac{\beta^2+\beta}{\beta^3+\beta+1}+\frac{\beta^4+\beta^2}{\beta^6+\beta^2+1},$$  where $w^3=1$ and $w\neq1$, therefore
   $z=w+\frac{\beta^2+\beta}{\beta^3+\beta+1}$ is a solution of $z^2+z+\frac{a^3}{b^2}=0$ . Then $t=\left(\beta^3+\beta+1\right)w+\beta^2+\beta$ is a solution of Eq.(\ref{5.21}). Moreover,
    \begin{eqnarray*}
    t&=&\left(\beta^3+\beta+1\right)w+\beta^2+\beta\\
    &=& w(\beta^3+\beta+1+w^2\beta^2+w^2\beta)\\
    &=& w(w^3\beta^3+w^2\beta^2+w\beta+1)\\
    &=& w(w\beta+1)^3,
    \end{eqnarray*}
    since $w^3=1$.

    Then $\epsilon^3=t$ has the solution $\epsilon=\sigma(w\beta+1)$, where $\sigma^3=w$, i.e., $\sigma^9=1$ and $\sigma^3\neq1$. Therefore,
    \begin{eqnarray*}
    &&\epsilon+\frac{\beta^2+\beta+1}{\epsilon}\\
    &=&\sigma(w\beta+1)+\sigma^8(w^2\beta+1)\\
    &=&(\sigma^4+\sigma^5)\beta+\sigma+\sigma^8\\
    &=& e^4\beta+e,
    \end{eqnarray*}
    where $e=\sigma+\frac{1}{\sigma}$, is a solution of Eq.(\ref{5.2}). We claim that $(e^4\beta+e)^q\ne e^4\beta+e$ when $k\not\equiv0\pmod 3$.

    Case I: $k\equiv1\pmod3$. Let $k=3l+1$. Since $\sigma^q=(\sigma^{8^l})^2=\sigma^{-2}$.  $e^q=\sigma^2+\frac{1}{\sigma^2}=e^2$. Then $(e^4\beta+e)^q=e^8\beta+e^2$. If $(e^4\beta+e)^q=e^4\beta+e$, then $e^8\beta+e^2=e^4\beta+e$. Therefore, we get
    $$\beta=\frac{e^2+e}{e^8+e^4}.$$
    Since $\beta\in\mathbb{F}_q$, we have
    $$\beta^q=\frac{e^4+e^2}{e^16+e^8}=\frac{e^2+e}{e^8+e^4},$$
    i.e., $\beta^2=\beta$, $\beta=1$. However, $(e^4\beta+e)^q=e^4\beta+e$. Then $(e^4\beta+e)^q=(e^4+e)^q=e^8+e^2=e^4+e$, i.e., $e^4+e=0$ or $1$. If $e=0$, $\sigma=1$, which is wrong. If $e=1$, $\sigma^2+\sigma+1=0$, i.e., $\sigma^3=1$. It is also wrong. Hence, in the case, $(e^4\beta+e)^q\ne e^4\beta+e$.

    Case II can be proved similarly as Case I. We omit it here.

    Hence,  Eq.(\ref{5.2}) has a solution which is not in $\mathbb{F}_q$.

	It follows that the theorem holds.
\end{proof}

\section{New permutation trinomials over $\mathbb{F}_{3^{2k}}$}
{In this section, we obtain two permutation trinomials with fixed exponents and then generalize them to two classes of permutation trinomials with one parameter in each class.}

First, {the following results about the trace  function and the norm function will  be needed later.
They can be easily verified and the proof is omitted here.}
\begin{Lemma}
	\label{lem3}
	Let $q=3^k$ and $x\in\mathbb{F}_{q^2}$. Then
	\begin{equation*}
	\mathrm{Tr}(x^2)=\mathrm{Tr}^2(x)+\mathrm{N}(x),
	\end{equation*}
	\begin{equation*}
	\mathrm{Tr}(x^5)=\mathrm{Tr}^5(x)+\mathrm{N}(x)\mathrm{Tr}^3(x)-\mathrm{N}^2(x)\mathrm{Tr}(x),
	\end{equation*}
	and
	\begin{equation*}
	\mathrm{Tr}(x^8)=\mathrm{Tr}^8(x)+\mathrm{N}(x)\mathrm{Tr}^6(x)-\mathrm{N}^2(x)\mathrm{Tr}^4(x)-\mathrm{N}^3(x)\mathrm{Tr}^2(x)-\mathrm{N}^4(x).
	\end{equation*}
\end{Lemma}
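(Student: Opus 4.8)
The plan is to reduce everything to a single linear recurrence for power sums. Write $s := \mathrm{Tr}(x) = x + x^q$ and $p := \mathrm{N}(x) = x^{q+1}$, so that $x$ and $x^q$ are exactly the two roots in $\mathbb{F}_{q^2}$ of the monic quadratic $t^2 - st + p \in \mathbb{F}_q[t]$. Multiplying the relation $x^2 = sx - p$ (and its conjugate $(x^q)^2 = s\,x^q - p$) by $x^{n-2}$, respectively $(x^q)^{n-2}$, and adding, one sees that the traces $\pi_n := x^n + (x^q)^n = \mathrm{Tr}(x^n)$ satisfy the Newton--Girard recurrence $\pi_n = s\,\pi_{n-1} - p\,\pi_{n-2}$ for $n \ge 2$, with $\pi_0 = 2$ and $\pi_1 = s$. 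This holds for every $x \in \mathbb{F}_{q^2}$, including $x \in \mathbb{F}_q$ (where $x = x^q$), so no case distinction is needed.

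Then I would simply iterate, reducing coefficients modulo $3$ (i.e. $2 = -1$, $3 = 0$ in $\mathbb{F}_q$). One gets $\pi_2 = s\pi_1 - p\pi_0 = s^2 - 2p = s^2 + p$, which is the first identity; next $\pi_3 = s\pi_2 - p\pi_1 = s^3$, $\pi_4 = s\pi_3 - p\pi_2 = s^4 - ps^2 - p^2$, and $\pi_5 = s\pi_4 - p\pi_3 = s^5 - 2ps^3 - p^2 s = s^5 + ps^3 - p^2 s$, which is the second identity after restoring $s = \mathrm{Tr}(x)$, $p = \mathrm{N}(x)$. Two more steps, $\pi_6 = s\pi_5 - p\pi_4 = s^6 + p^3$ and $\pi_7 = s\pi_6 - p\pi_5 = s^7 - ps^5 - p^2 s^3 - p^3 s$, give $\pi_8 = s\pi_7 - p\pi_6 = s^8 - 2ps^6 - p^2 s^4 - p^3 s^2 - p^4 = s^8 + ps^6 - p^2 s^4 - p^3 s^2 - p^4$, the third identity.

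There is essentially no obstacle: the statement is elementary, and the whole proof is the recurrence above plus careful bookkeeping of the characteristic-$3$ reductions over the at most seven iterations required. A convenient internal consistency check, which I would mention in passing, is the Frobenius relation $\pi_{3m} = \pi_m^3$ (valid since $(a+b)^3 = a^3 + b^3$ in characteristic $3$): it gives $\pi_6 = \pi_2^3 = (s^2+p)^3 = s^6 + p^3$, matching the computation, and likewise $\pi_3 = \pi_1^3 = s^3$.
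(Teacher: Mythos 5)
Your proof is correct. The paper itself omits the proof of this lemma (it says the identities ``can be easily verified''), so there is no argument to compare against; your Newton--Girard recurrence $\pi_n = s\,\pi_{n-1} - p\,\pi_{n-2}$ with $\pi_0=2$, $\pi_1=s$, followed by reduction of coefficients modulo $3$, is exactly the kind of routine verification intended, and each of the seven iterations checks out (including the sign bookkeeping $-2\equiv 1$ and the Frobenius consistency check $\pi_{3m}=\pi_m^3$). Nothing is missing.
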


For convenience, let
$t=\mathrm{Tr}(x)$, $n=\mathrm{N}(x)$, $\alpha=\mathrm{Tr}(f(x))$ and $\beta=\mathrm{N}(f(x))$, where $x\in\mathbb{F}_{q^2}$. And suppose $s=\frac{t^2}{n}$, $\gamma=\frac{\alpha^2}{\beta}$.

\begin{Th}
	\label{3P1}
	Let $q=3^k$, where $k \not \equiv 0 \pmod 4$ and $f(x)=x-x^{2q-1}+x^{q^2-2q+2}$. Then $f(x)$ is a permutation trinomial over $\mathbb{F}_{q^2}$.
	\end{Th}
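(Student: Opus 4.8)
The plan is to show that $f$ is a bijection of $\mathbb{F}_{q^2}$ by proving it is injective (the set is finite), splitting the analysis according to whether a value lies in $\mathbb{F}_q$ or not, in the spirit of the trace/norm strategy described in the introduction.

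\textbf{Step 1 (behaviour on $\mathbb{F}_q$ and the Frobenius).} First I would record that $f(x)^q=f(x^q)$ for every $x\in\mathbb{F}_{q^2}$, a routine check once the exponents $2q-1$ and $q^2-2q+2$ are reduced modulo $q^2-1$; thus $f$ commutes with the Frobenius $x\mapsto x^q$. Writing $f(x)=x\bigl(1-x^{2(q-1)}+x^{-2(q-1)}\bigr)$ for $x\neq 0$ and using $x^{q-1}=1$ on $\mathbb{F}_q^{\ast}$ gives $f(x)=x$ for all $x\in\mathbb{F}_q$. Hence $f$ fixes $\mathbb{F}_q$ pointwise, and it remains to treat $x\in\mathbb{F}_{q^2}\setminus\mathbb{F}_q$ and to check that no such $x$ is sent into $\mathbb{F}_q$.

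\textbf{Step 2 (trace and norm of $f(x)$).} For $x\in\mathbb{F}_{q^2}\setminus\mathbb{F}_q$ put $t=\mathrm{Tr}(x)$, $n=\mathrm{N}(x)\neq 0$, $s=t^2/n$; then $X^2-tX+n$ is the minimal polynomial of $x$ over $\mathbb{F}_q$, and $x\notin\mathbb{F}_q$ is equivalent to $t^2-n$ being a nonzero non-square in $\mathbb{F}_q$. Using $x^{2q-1}=(x^q)^2x^{-1}$ and $x^{q^2-2q+2}=x^3(x^q)^{-2}$ I would rewrite $f(x)=(x^4-(x^q)^4+n^2)/(nx^q)$, from which $\mathrm{Tr}(f(x))=t^5/n^2$ and $\mathrm{N}(f(x))=-\mathrm{Tr}(x^8)/n^3=n\bigl(1+s+s^2-s^3-s^4\bigr)$, the last equality by Lemma~\ref{lem3}. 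The same computation gives $f(x)-f(x^q)=(x-x^q)(t^4+t^2n+n^2)/n^2=(x-x^q)(s-1)^2$ in characteristic $3$; since $x\notin\mathbb{F}_q$ forces $s\neq 1$ (otherwise $t^2=n$ and the minimal polynomial splits), this is nonzero, so $f(x)\neq f(x)^q$, i.e.\ $f$ maps $\mathbb{F}_{q^2}\setminus\mathbb{F}_q$ into itself. Together with Step~1 this disposes of every $c\in\mathbb{F}_q$ and reduces the theorem to injectivity of $f$ on $\mathbb{F}_{q^2}\setminus\mathbb{F}_q$.

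\textbf{Step 3 (reduction to one variable).} Since $(\mathrm{Tr}(x),\mathrm{N}(x))$ determines $\{x,x^q\}$ and $f(x)\neq f(x)^q$, it suffices to show $(t,n)\mapsto(\mathrm{Tr}(f(x)),\mathrm{N}(f(x)))$ is injective on $\{(t,n):t^2-n\text{ a nonzero non-square}\}$. From the Step~2 formulas, $\gamma:=\mathrm{Tr}(f(x))^2/\mathrm{N}(f(x))=s^5/(1+s+s^2-s^3-s^4)=:\phi(s)$ depends only on $s$, and from $\mathrm{Tr}(f(x))=ts^2$ one recovers $t$ (note $s=0$ forces $t=0$) and then $n=t^2/s$. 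So everything comes down to $\phi$ being injective on $S:=\{0\}\cup\{s\in\mathbb{F}_q:s(s-1)\text{ a nonzero non-square}\}$ (the point $s=0$ is harmless since $\phi(s)=0\iff s=0$, and $\phi$ is well defined on $S$ because its denominator never vanishes there, which amounts to $16\nmid q+1$, automatic for $q=3^k$).

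\textbf{Step 4 (the hard step: injectivity of $\phi$ on $S$).} I would parametrize $S$ by $s=\zeta+\zeta^{-1}-1$ with $\zeta\in\mu_{q+1}\setminus\{1\}$ (a two-to-one map via $\zeta\leftrightarrow\zeta^{-1}$). A Chebyshev-type computation valid in characteristic $3$ (using $(w-1)^3=w^3-1$) gives the compact form $\phi(s)=-1-(\zeta^5+\zeta^{-5})/(\zeta^4+\zeta^{-4})$. Then $\phi(\zeta_1+\zeta_1^{-1}-1)=\phi(\zeta_2+\zeta_2^{-1}-1)$, after clearing denominators, peeling off the trivial factors $(\zeta_1-\zeta_2)(\zeta_1\zeta_2-1)$, and using $P^9-1=(P-1)^9$, collapses to $U^4+V^4=0$, where $U=\zeta_1\zeta_2+(\zeta_1\zeta_2)^{-1}+1$ and $V=\zeta_2\zeta_1^{-1}+(\zeta_2\zeta_1^{-1})^{-1}+1$ lie in $\mathbb{F}_q$. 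Thus a nontrivial coincidence forces $(U/V)^4=-1$, i.e.\ $U/V$ of order $8$ in $\mathbb{F}_q^{\ast}$, which already settles the case of odd $k$. For $k\equiv 2\pmod 4$ one must in addition feed in the residue constraints inherited from $S$, namely that $U(U+1)$ and $V(V+1)$ are non-squares (or $U$, resp.\ $V$, equals $-1$), together with the identities $U+V+1=(s_1+1)(s_2+1)$ and $UV=(s_1+1)^2+(s_1+1)(s_2+1)+(s_2+1)^2$, to reach a contradiction precisely when $k\not\equiv 0\pmod 4$. Turning $(U/V)^4=-1$ plus these quadratic-residue conditions into an impossibility is the technical heart of the argument; every preceding step is mechanical, and I expect the bookkeeping of residues and root-of-unity orders in this last step to be the main obstacle.
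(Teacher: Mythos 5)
Your Steps 1--3 follow essentially the same route as the paper (Hou's trace/norm method): $f$ fixes $\mathbb{F}_q$ pointwise, maps $\mathbb{F}_{q^2}\setminus\mathbb{F}_q$ into itself, and the problem reduces to the injectivity of $\gamma=\mathrm{Tr}(f(x))^2/\mathrm{N}(f(x))$ as a function of $s=t^2/n$. Your formulas $\mathrm{Tr}(f(x))=t^5/n^2$ and $\mathrm{N}(f(x))=n(1+s+s^2-s^3-s^4)$ agree with the paper's, and your factorization $f(x)-f(x)^q=(x-x^q)(s-1)^2$ is a clean alternative to the paper's computation $(y-1)^5=0$ with $y=x^{q-1}$.

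The genuine gap is Step 4, which is the entire substance of the theorem and which you explicitly leave unfinished (``I expect the bookkeeping \ldots to be the main obstacle''). You never prove that $\phi(s)=s^5/(1+s+s^2-s^3-s^4)$ is injective, and the route you sketch --- parametrizing by $\mu_{q+1}$, reducing to $U^4+V^4=0$, then excluding this via quadratic-residue constraints, especially in the case $k\equiv 2\pmod 4$ where $8\mid q-1$ so that elements with $(U/V)^4=-1$ really exist --- is both unverified and far harder than necessary. The observation you are missing is a one-line identity in characteristic $3$: setting $u=1/s$, the relation $\gamma=\phi(s)$ becomes $1/\gamma=u^5+u^4+u^3-u^2-u=(u-1)^5+1$, and the right-hand side is a bijection of $\mathbb{F}_q$ exactly when $\gcd(5,q-1)=1$, i.e.\ when $k\not\equiv 0\pmod 4$. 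This yields injectivity of $s\mapsto\gamma$ on all of $\mathbb{F}_q^{*}$ at once (no restriction to your set $S$ and no residue bookkeeping), recovers $s$, hence $t$ and $n$, hence $\{x,x^q\}$, and is precisely where the hypothesis on $k$ enters. Without this identity, or a completed version of your Step 4, the argument is incomplete at its crucial point.
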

	\begin{proof}
		(i) First, we show the fact $f(\mathbb{F}_{q^2} \backslash \mathbb{F}_q) \subseteq \mathbb{F}_{q^2} \backslash \mathbb{F}_q$.
		
		If not, then there exists  $x\in \mathbb{F}_{q^2}\backslash \mathbb{F}_{q}$ such that $f(x)=f(x)^{q}$, i.e.,
		\begin{equation}
		\label{key1}
		x-x^{2q-1}+x^{3-2q}=x^q-x^{2-q}+x^{3q-2}.
		\end{equation}
		Let $y=x^{q-1}$. Then $y\neq1$ and Eq.(\ref{key1}) reduces to
		\begin{equation*}
		1-y^2+y^{-2}=y-y^{-1}+y^3.
		\end{equation*}
		i.e.,
		\begin{equation*}
		(y-1)^5=0.
		\end{equation*}
		Then $y=1$, which is a contradiction.
		
		(ii) We prove that $f(x)=c$ has at most one solution in $\mathbb{F}_{q^2}$ for any $c \in \mathbb{F}_{q^2}$.
		
		If $c \in \mathbb{F}_{q}$, we have $x \in \mathbb{F}_{q}$ by (i). Then $f(x)=x$, so we must have $x=c$.
		
		In the following, we assume $c \in \mathbb{F}_{q^2}\backslash \mathbb{F}_q$. Then it follows that
		\begin{eqnarray*}
			\mathrm{Tr}(f(x))&=&\mathrm{Tr}(x)-\mathrm{Tr}(x^{2q-1})+\mathrm{Tr}(x^{3-2q})\\
			&=&\mathrm{Tr}(x)-\frac{x^{3q}+x^3}{x^{1+q}}+\frac{x^{5q}+x^5}{x^{2+2q}}\\
			&=&\mathrm{Tr}(x)-\frac{\mathrm{Tr}(x^3)}{\mathrm{N}(x)}+\frac{\mathrm{Tr}(x^5)}{\mathrm{N}^2(x)},
			\end{eqnarray*}
			and
			\begin{eqnarray*}
				\mathrm{N}(f(x))&=&(x-x^{2q-1}+x^{3-2q})(x^q-x^{2-q}+x^{3q-2})\\
				&=&-x^{5q-3}-x^{5-3q}=-\frac{x^{8q}+x^8}{x^{3q+3}}\\
				&=&-\frac{\mathrm{Tr}(x^8)}{\mathrm{N}^3(x)}.
				\end{eqnarray*}
				Plugging Lemma \ref{lem3} into the above two equations, we have
				\begin{equation}
				\label{key3}
				\alpha=\mathrm{Tr}(f(x))=t-\frac{t^3}{n}+\frac{t^5-n^2t+nt^3}{n^2}=\frac{t^5}{n^2},
				\end{equation}
				and
				\begin{equation}
				\label{key4}
				\beta=\mathrm{N}(f(x))=-n(\frac{t^8}{n^4}+\frac{t^6}{n^3}-\frac{t^4}{n^2}-\frac{t^2}{n}-1).
				\end{equation}
				Computing $(\ref{key3})^2/(\ref{key4})$, we get
				\begin{equation}
				\label{key5}
				-\frac{s^5}{s^4+s^3-s^2-s-1}=\gamma,
				\end{equation}
				i.e.,
				\begin{equation*}
				\frac{1}{s^5}+\frac{1}{s^4}+\frac{1}{s^3}-\frac{1}{s^2}-\frac{1}{s}=\frac{1}{\gamma}.
				\end{equation*}
				Let $g(u)=u^5+u^4+u^3-u^2-u=(u-1)^5+1\in \mathbb{F}_{q}[u]$. Clearly, $g(v)=v^5$ is a permutation polynomial over $\mathbb{F}_q$ when $k\not\equiv0\pmod4$. So Eq.(\ref{key5}) has only one solution in $\mathbb{F}_q$. Plugging the solution into (\ref{key3}) and (\ref{key4}), we know that $\mathrm{Tr}(x)$ and $\mathrm{N}(x)$ are also uniquely determined by $c$. {Then the result follows from $f(\mathbb{F}_{q^2} \backslash \mathbb{F}_q) \subseteq \mathbb{F}_{q^2} \backslash \mathbb{F}_q$ and the fact that the set $\{x,x^q\}$ is uniquely determined by $\mathrm{Tr}(x)$ and $\mathrm{N}(x)$.}
				
				We complete the proof.
				\end{proof}

	In Theorem \ref{3P1}, $f(x)=x(1-x^{2(q-1)}+x^{2(1-q)})=x^rh(x^{q-1})$, where $r=1$ and $h(x)=1-x^2+x^{-2}$. According to the definition, the fractional polynomial
	\begin{equation*}
	g(x)=x^rh^{q-1}(x)=\frac{x^5+x^3-x}{-x^4+x^2+1}.
	\end{equation*}
{It follows from Lemma \ref{lem1} and Theorem \ref{3P1} that $g(x)$ permutes $\mu_{q+1}$ when $k \not \equiv 0 \pmod 4$.} And then we can obtain more permutation trinomials as follows.

\begin{Cor}
\label{Cor1}
		Let $q=3^k$, $r=1+(q+1)l$ and $h(x)=1-x^2+x^{-2}$, where $l\ge0$ and $k$ is an integer, which satisfies $k \not \equiv 0 \pmod 4$ and $\mathrm{gcd}(2l+1,q-1)=1$. Then $f(x)=x^rh(x^{q-1})=x^{lq+l+1}+x^{(l+2)q+l-1}-x^{(l-2)q+l+3}$ permutes $\mathbb{F}_{q^2}$.
\end{Cor}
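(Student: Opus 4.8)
The plan is to obtain the corollary from Theorem~\ref{3P1} and Lemma~\ref{lem1}, in the same manner in which Theorems~\ref{tab1}, \ref{Rem2} and \ref{tab2} were deduced from their base cases in Section~2. I would apply Lemma~\ref{lem1} with $d=q+1$, so that $(q^2-1)/d=q-1$: then $f(x)=x^rh(x^{q-1})$ permutes $\mathbb{F}_{q^2}$ if and only if $\mathrm{gcd}(r,q-1)=1$ and the fractional polynomial $x^rh(x)^{q-1}$ permutes $\mu_{q+1}$. Both conditions are short to check, and the essential input---that the relevant fractional polynomial permutes $\mu_{q+1}$---has already been recorded in the paragraph preceding the corollary as a consequence of Theorem~\ref{3P1}.

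For the first condition, I would use $q+1\equiv 2\pmod{q-1}$ to get $r=1+(q+1)l\equiv 2l+1\pmod{q-1}$, whence $\mathrm{gcd}(r,q-1)=\mathrm{gcd}(2l+1,q-1)=1$ by hypothesis. For the second condition, the point is that the extra parameter $l$ disappears on $\mu_{q+1}$: for every $x\in\mu_{q+1}$ one has $x^{q+1}=1$, so $x^{(q+1)l}=1$ and hence $x^rh(x)^{q-1}=x\,h(x)^{q-1}=g(x)$, where $g(x)=(x^5+x^3-x)/(-x^4+x^2+1)$ is exactly the fractional polynomial displayed just above the corollary. By the discussion preceding the statement (Lemma~\ref{lem1} together with Theorem~\ref{3P1}), this $g$ permutes $\mu_{q+1}$ whenever $k\not\equiv 0\pmod 4$; thus condition~(2) of Lemma~\ref{lem1} holds, and Lemma~\ref{lem1} yields that $f$ permutes $\mathbb{F}_{q^2}$. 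Finally, writing $f(x)=x^rh(x^{q-1})$ out with $r=1+(q+1)l$ and $h(x)=1-x^2+x^{-2}$ recovers the trinomial displayed in the statement.

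I do not expect any real obstacle here: the substantive argument---the trace/norm computation based on Lemma~\ref{lem3} showing that the fractional polynomial attached to $x-x^{2q-1}+x^{q^2-2q+2}$ permutes $\mu_{q+1}$, which in the proof of Theorem~\ref{3P1} came down to $g(u)=u^5+u^4+u^3-u^2-u=(u-1)^5+1$ being a permutation of $\mathbb{F}_q$ precisely for $k\not\equiv 0\pmod 4$---is already in place. The only new steps are the two elementary reductions above. The one point worth a sentence of care is that $h(x)=1-x^2+x^{-2}$ must have no zero on $\mu_{q+1}$, so that $x^rh(x)^{q-1}$ is genuinely represented by $g$ there; but this is already implicit in (and needed for) the assertion that $g$ permutes $\mu_{q+1}$, so nothing further has to be proved.
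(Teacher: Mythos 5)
Your proposal is correct and follows exactly the route the paper takes: the corollary is deduced from Theorem~\ref{3P1} together with Lemma~\ref{lem1}, using that $x^{(q+1)l}=1$ on $\mu_{q+1}$ (so the fractional polynomial is unchanged) and that $r\equiv 2l+1\pmod{q-1}$ gives $\gcd(r,q-1)=1$. Nothing is missing.
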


\begin{Th}
	\label{3P2}
	Let $q=3^k$, where $k$ is odd and $f(x)=x+x^{3q-2}-x^{q^2-q+1}$. Then $f(x)$ is a permutation trinomial over $\mathbb{F}_{q^2}$.
\end{Th}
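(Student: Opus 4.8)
The plan is to run the argument of Theorem~\ref{3P1} (Hou's approach): first prove $f(\mathbb{F}_{q^2}\setminus\mathbb{F}_q)\subseteq\mathbb{F}_{q^2}\setminus\mathbb{F}_q$, then prove that $f(x)=c$ has at most one solution in $\mathbb{F}_{q^2}$ for every $c$; an injective self-map of a finite set being bijective, this shows $f$ permutes $\mathbb{F}_{q^2}$. I will use the standing notation $t=\mathrm{Tr}(x)$, $n=\mathrm{N}(x)$, $\alpha=\mathrm{Tr}(f(x))$, $\beta=\mathrm{N}(f(x))$, $s=t^2/n$, $\gamma=\alpha^2/\beta$.

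\emph{Step 1 ($f$ preserves $\mathbb{F}_{q^2}\setminus\mathbb{F}_q$).} Suppose $x\in\mathbb{F}_{q^2}\setminus\mathbb{F}_q$ satisfies $f(x)=f(x)^q$. Reducing exponents modulo $q^2-1$ (note $x^{q^2-q+1}=x^{2-q}$) this reads $x+x^{3q-2}-x^{2-q}=x^q+x^{3-2q}-x^{2q-1}$; dividing by $x$ and setting $y=x^{q-1}\ne1$ gives $1+y^3-y^{-1}=y+y^{-2}-y^2$, i.e. $y^5+y^4-y^3+y^2-y-1=0$. Since $y^5+y^4-y^3+y^2-y-1=(y-1)(y^4-y^3+y^2-y+1)$ and $y\ne1$, $y$ is a root of $y^4-y^3+y^2-y+1$, hence a primitive $10$-th root of unity; but $\mathbb{F}_{q^2}$ contains one only when $10\mid q^2-1$, equivalently $5\mid 3^{2k}-1$, equivalently $k$ even — contradicting the hypothesis. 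This is the first place ``$k$ odd'' enters.

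\emph{Step 2 (uniqueness of preimages).} If $c\in\mathbb{F}_q$, any preimage $x$ lies in $\mathbb{F}_q$ by Step~1, and there $x^q=x$ gives $f(x)=x+x-x=x$, so $x=c$. If $c\in\mathbb{F}_{q^2}\setminus\mathbb{F}_q$, any preimage $x$ lies in $\mathbb{F}_{q^2}\setminus\mathbb{F}_q$ by Step~1. I would then compute, exactly as in Theorem~\ref{3P1} via Lemma~\ref{lem3} together with $\mathrm{Tr}(x^3)=t^3$, $\mathrm{Tr}(x^6)=(t^2+n)^3$, $\mathrm{Tr}(x^{3q-2})=\mathrm{Tr}(x^5)/n^2$, $\mathrm{Tr}(x^{2-q})=t^3/n$,
\[
\alpha=\mathrm{Tr}(f(x))=\frac{t^5}{n^2},\qquad \beta=\mathrm{N}(f(x))=\frac{\mathrm{Tr}(x^6)}{n^2}-\mathrm{Tr}(x^2)-\frac{\mathrm{Tr}(x^8)}{n^3}=n(1+s^2-s^4),
\]
so that $\gamma=s^5/(1+s^2-s^4)$. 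If $\alpha=0$ then $t=0$, hence $x^q=-x$ and $f(x)=x$, giving $x=c$. If $\alpha\ne0$ then $s\ne0$, and putting $w=1/s$ the relation becomes $1/\gamma=g(w)$ with $g(w)=w^5+w^3-w$. Once $w$, hence $s$, is determined, $t=\alpha/s^2$ and $n=\alpha^2/s^5$ are determined, so $\{x,x^q\}$ is determined; and a second preimage $x'$ would then satisfy $\{x',x'^q\}=\{x,x^q\}$, with $x'=x^q$ impossible since $f(x^q)=f(x)^q=c^q\ne c$. Hence uniqueness reduces to showing $g$ permutes $\mathbb{F}_q$.

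\emph{The main obstacle.} Proving that $g(w)=w^5+w^3-w$ permutes $\mathbb{F}_q$ is the analogue of ``$v\mapsto v^5$ permutes $\mathbb{F}_q$'' in Theorem~\ref{3P1}, but here it is not automatic. I would argue as follows: if $g(w_1)=g(w_2)$ with $w_1\ne w_2$, then putting $p=w_1+w_2$, $m=w_1w_2$ and using $w_1^4+w_1^3w_2+w_1^2w_2^2+w_1w_2^3+w_2^4=p^4+m^2$ and $w_1^2+w_1w_2+w_2^2=p^2-m$ one gets $m^2-m+p^4+p^2-1=0$. The key point is the characteristic-$3$ identity $p^4+p^2+1=(p^2-1)^2$, which makes the discriminant of this quadratic in $m$ equal to $-(p^2-1)^2$; since $-1$ is a non-square in $\mathbb{F}_q$ for $k$ odd, it is a square only when $p=\pm1$, which forces $m=-1$ and hence $w_1,w_2$ to be the roots of $X^2\mp X-1$, whose discriminant $-1$ is a non-square — contradicting $w_1,w_2\in\mathbb{F}_q$. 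Thus $g$ permutes $\mathbb{F}_q$, $s$ (and with it $t,n$, hence $\{x,x^q\}$) is recovered from $c$, Step~2 is finished, and $f$ permutes $\mathbb{F}_{q^2}$. Along the way one also obtains, through Lemma~\ref{lem1}, that the fractional polynomial $x\,h(x)^{q-1}$ with $h(x)=1+x^3-x^{-1}$ permutes $\mu_{q+1}$, which is what the subsequent one-parameter generalization will build on.
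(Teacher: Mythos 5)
Your proof is correct, and its overall architecture is the same as the paper's (Hou's trace-and-norm method): the formulas $\mathrm{Tr}(f(x))=t^5/n^2$ and $\mathrm{N}(f(x))=n(1+s^2-s^4)$, and the reduction to the statement that $g(u)=u^5+u^3-u$ permutes $\mathbb{F}_q$, all coincide with the paper's Eqs.\ (3.14)--(3.16). You differ, however, in the two sub-steps where the hypothesis ``$k$ odd'' actually enters, and in both places your route is genuinely different. In Step 1 the paper rules out $y^4-y^3+y^2-y+1=0$ by substituting $y=z+1$, rewriting the equation as $z^4=-(z+1)^2$, extracting $\epsilon$ with $\epsilon^2=-1$, $\epsilon^q=-\epsilon$, and playing the resulting quadratic against its $q$-th power; your observation that this quartic is the tenth cyclotomic polynomial, so a root would be an element of order $10$, forcing $5\mid 9^k-1$ and hence $k$ even, is shorter and cleaner. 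For the key lemma, the paper simply cites the Lidl--Niederreiter table of normalized permutation polynomials of degree five (the entry $x^5+ax^3+5^{-1}a^2x$ for $q\equiv\pm2\pmod 5$, which for $q=3^k$ is exactly $k$ odd), whereas you give a self-contained elementary proof: I checked that $h_4(w_1,w_2)=p^4+m^2$ in characteristic $3$, that the quadratic in $m$ is $m^2-m+p^4+p^2-1=0$ with discriminant $-(p^2-1)^2$ via $p^4+p^2+1=(p^2-1)^2$, and that the endgame ($p=\pm1$, $m=-1$, discriminant $-1$ of $X^2\mp X-1$ a non-square) is sound. This buys independence from the classification table at the cost of a computation. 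Two further small merits of your write-up: you explicitly dispose of the degenerate case $\alpha=0$ (where $1/\gamma$ is undefined) by noting $t=0$ forces $f(x)=x$, a case the paper passes over silently, and your sign $+\mathrm{Tr}(x^6)/\mathrm{N}^2(x)$ in the norm expansion is the correct one (the paper's intermediate line has a sign slip there, though its final formula agrees with yours).
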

\begin{proof}
	The proof is similar to that in Theorem \ref{3P1}.
	
	(i) First, we show the fact $f(\mathbb{F}_{q^2} \backslash \mathbb{F}_q) \subseteq \mathbb{F}_{q^2} \backslash \mathbb{F}_q$.
	
	If not, then there exists  $x\in \mathbb{F}_{q^2}\backslash \mathbb{F}_{q}$ such that $f(x)=f(x)^{q}$. Then
	\begin{equation}
		\label{3P2.1}
		x+x^{3q-2}-x^{2-q}=x^q+x^{3-2q}-x^{2q-1}.
	\end{equation}
	Let $y=x^{q-1}$. Then Eq.(\ref{3P2.1}) becomes
	\begin{equation*}
		1+y^3-y^{-1}=y+y^{-2}-y^2.
	\end{equation*}
	i.e.,
	\begin{equation*}
		(y-1)(y^4-y^3+y^2-y+1)=0.
	\end{equation*}
	Then either $y=1$ or $y^4-y^3+y^2-y+1=0$. If $y=1$, then $x\in\mathbb{F}_q$, which is a contradiction. In the other case, let $y=z+1$. Plugging it into $y^4-y^3+y^2-y+1=0$, we get
	\begin{equation*}
		z^4+z^2-z+1=0.
	\end{equation*}
	i.e.,
		\begin{equation}
		\label{3P2.2}
		z^4=-(z+1)^2.
		\end{equation}
		Because $-1$ is a square in $\mathbb{F}_{q^2}$, there exist some $\epsilon\in\mathbb{F}_{q^2}$ such that $\epsilon^2=-1$. And $\epsilon^q=-\epsilon$ since $q=3^k$ and $k$ is odd. Then we know $z^2=\pm\epsilon(z+1)$ from Eq.(\ref{3P2.2}). Recalling $z=y-1$,  we get
		 \begin{equation}
		 \label{3P2.3}
		 y^2+(1\pm\epsilon)y+1=0.
		 \end{equation}
		 Raising Eq.(\ref{3P2.3}) to its $q$-th power, and using $y^{q+1}=1$ since $y=x^{q-1}$, we obtain
		 \begin{equation*}
		 y^{-2}+(1\mp\epsilon)y^{-1}+1=0,
		 \end{equation*}
		 i.e.,
		 \begin{equation}
		 \label{3P2.4}
		 y^2+(1\mp\epsilon)y+1=0.
		 \end{equation}
		 Computing Eq.(\ref{3P2.3})$-$Eq.(\ref{3P2.4}) leads to $y=0$. It contradicts since  $y=0$ is clearly not a solution of $y^4-y^3+y^2-y+1=0$.
    Hence, $f(\mathbb{F}_{q^2} \backslash \mathbb{F}_q) \subseteq \mathbb{F}_{q^2} \backslash \mathbb{F}_q$ holds.
	
	(ii) We prove that $f(x)=c$ has at most one solution in $\mathbb{F}_{q^2}$ for any $c \in \mathbb{F}_{q^2}$.
	
	If $c \in \mathbb{F}_{q}$, we have $x \in \mathbb{F}_{q}$ by (i). Then $f(x)=x$, so we must have $x=c$.
	
	In the following, we assume $c \in \mathbb{F}_{q^2}\backslash \mathbb{F}_q$. We have
	\begin{eqnarray*}
		\mathrm{Tr}(f(x))=\mathrm{Tr}(x)-\mathrm{Tr}(x^{2q-1})+\mathrm{Tr}(x^{3-2q}),
	\end{eqnarray*}
	which is equal to the $\mathrm{Tr}(f(x))$ in Theorem \ref{3P1},
	and
	\begin{eqnarray*}
		\mathrm{N}(f(x))&=&(x+x^{3q-2}-x^{2-q})(x^q+x^{3-2q}-x^{2q-1})\\
		&=&x^{4-2q}+x^{4q-2}-x^2-x^{2q}-x^{5q-3}-x^{5-3q}\\
		&=&-\frac{\mathrm{Tr}(x^6)}{\mathrm{N}^2(x)}-\mathrm{Tr}(x^2)-\frac{\mathrm{Tr}(x^8)}{\mathrm{N}^3(x)}.
	\end{eqnarray*}
	Plugging Lemma \ref{lem3} into the above two equations, we have
	\begin{equation}
		\label{3P2.5}
		\alpha=\mathrm{Tr}(f(x))=\frac{t^5}{n^2},
	\end{equation}
	and
	\begin{equation}
		\label{3P2.6}
		\beta=\mathrm{N}(f(x))=-n(\frac{t^8}{n^4}-\frac{t^4}{n^2}-1).
	\end{equation}
	Computing $(\ref{3P2.5})^2/(\ref{3P2.6})$, we get
	\begin{equation}
		\label{3P2.7}
		-\frac{s^5}{s^4-s^2-1}=\gamma,
	\end{equation}
	i.e.,
	\begin{equation*}
		\frac{1}{s^5}+\frac{1}{s^3}-\frac{1}{s}=\frac{1}{\gamma}.
	\end{equation*}
	Let $g(u)=u^5+u^3-u\in \mathbb{F}_{q}[u]$ {. Then according to \cite{LN}, $g$  is a normalized permutation polynomial over $\mathbb{F}_q$ if $q\equiv\pm2\pmod5$.} Recall that a  polynomial $f$ is called normalized if
it is monic, $f(0)=0$, and when the degree $n$ of $f$ is not divisible by the characteristic of $\mathbb{F}_q$, the coefficient of $x^{n-1}$ is $0$.  So Eq.(\ref{3P2.7}) has only one solution in $\mathbb{F}_q$. Plugging the solution into Eq.(\ref{3P2.5}) and Eq.(\ref{3P2.6}), we can know $\mathrm{Tr}(x)$ and $\mathrm{N}(x)$ are also uniquely determined by $c$.
	
	We complete the proof.
\end{proof}

	In Theorem \ref{3P2}, $f(x)=x(1+x^{3(q-1)}-x^{1-q})=x^rh(x^{q-1})$, where $r=1$ and $h(x)=1+x^3-x^{-1}$. Therefore, the fractional polynomial
	\begin{equation*}
	g(x)=x^rh^{q-1}(x)=\frac{-x^4+x^3+1}{x^5+x^2-x}
	\end{equation*}
	 permutes $\mu_{q+1}$ when $k$ is odd due to Lemma \ref{lem1} and Theorem \ref{3P2}. Hence, we can obtain more permutation trinomials as follows by Lemma \ref{lem1}.
\begin{Cor}
	\label{Cor2}	
	Let $q=3^k$, $r=1+(q+1)l$ and $h(x)=1+x^3-x^{-1}$, where $l\ge0$ and $k$ is an integer, which satisfies $k$ is odd and $\mathrm{gcd}(2l+1,q-1)=1$. Then $f(x)=x^rh(x^{q-1})=x^{lq+l+1}+x^{(l+3)q+l-2}-x^{(l-1)q+l+2}$ permutes $\mathbb{F}_{q^2}$.
\end{Cor}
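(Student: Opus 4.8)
The plan is to deduce the statement directly from Lemma~\ref{lem1}, in exactly the way suggested by the paragraph preceding it. Take $d=q+1$, so that $(q^2-1)/d=q-1$, and apply Lemma~\ref{lem1} to $f(x)=x^rh\bigl(x^{(q^2-1)/d}\bigr)$ with $r=1+(q+1)l$ and $h(x)=1+x^3-x^{-1}$ over $\mathbb{F}_{q^2}$. Two things must be checked: $\gcd(r,q-1)=1$, and that $g(x)=x^rh(x)^{q-1}$ permutes $\mu_{q+1}$.

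For the first condition I would use that $q+1=(q-1)+2$, whence $r=1+(q+1)l\equiv 1+2l\pmod{q-1}$, so $\gcd(r,q-1)=\gcd(2l+1,q-1)$, which equals $1$ by hypothesis.

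For the second condition, the point is that introducing the parameter $l$ in the exponent does nothing on $\mu_{q+1}$: if $x\in\mu_{q+1}$ then $x^{q+1}=1$, hence $x^r=x^{1+(q+1)l}=x\cdot(x^{q+1})^l=x$. Therefore $g(x)=x\,h(x)^{q-1}$ for every $x\in\mu_{q+1}$, i.e. $g$ agrees on $\mu_{q+1}$ with the fractional polynomial
$$x\,h(x)^{q-1}=\frac{-x^4+x^3+1}{x^5+x^2-x},$$
which was already shown to permute $\mu_{q+1}$ for odd $k$ in the discussion following Theorem~\ref{3P2} (it is precisely the fractional polynomial attached, via Lemma~\ref{lem1}, to the permutation trinomial of Theorem~\ref{3P2}). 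So condition~(2) holds, and Lemma~\ref{lem1} yields that $f$ permutes $\mathbb{F}_{q^2}$.

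There is essentially no hard step here, since all the substantive work — including the fact that $h$ does not vanish on $\mu_{q+1}$ for odd $k$ — is contained in Theorem~\ref{3P2}. The only mild care needed is bookkeeping: confirming that the monomial expansion $x^{lq+l+1}+x^{(l+3)q+l-2}-x^{(l-1)q+l+2}$ is the correct reading of $x^rh(x^{q-1})$ as a function on $\mathbb{F}_{q^2}$, with negative powers interpreted modulo $q^2-1$ and the value at $0$ set to $0$. That is the closest thing to an obstacle, and it is routine.
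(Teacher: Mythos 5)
Your proposal is correct and follows exactly the paper's own route: deduce from Theorem~\ref{3P2} and Lemma~\ref{lem1} that the fractional polynomial $\frac{-x^4+x^3+1}{x^5+x^2-x}$ permutes $\mu_{q+1}$ for odd $k$, note that $x^{1+(q+1)l}=x$ on $\mu_{q+1}$ and $r\equiv 2l+1\pmod{q-1}$, and apply Lemma~\ref{lem1} again. No issues.
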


\section{Multiplicative inequivalences of Newly constructed permutation trinomials}
 {In this section, we show that permutation trinomials in this paper are not multiplicative equivalent to the known results. First} we recall the following definition of multiplicative equivalence between permutation polynomials.
\begin{Def}
	\label{def}
	\cite{LQC}
	Two permutation polynomials $f(x)$ and $g(x)$ in $\mathbb{F}_{q}[x]$ are called multiplicative equivalent if there exists an integer $1 \le d \le q-1$ such that $\mathrm{gcd}(d,q-1)=1$ and $f(x)=g(x^d)$.
\end{Def}
As we all know, $g(x)=cf(ax+b)+d$, where $a,b,c,d\in\mathbb{F}_q$ and $a,c\ne0$ is a permutation polynomial over $\mathbb{F}_q$ if and only if  so is $f(x)$.
{Here we only consider the multiplicative equivalence since most newly constructed permutation polynomials here are with
trivial coefficients, that is, whose nonzero coefficient is only $1$.  }

\subsection{The comparison over $\mathbb{F}_{2^{2k}}$}

First, we show that our {permutation trinomials} are not multiplicative equivalent to the known results over $\mathbb{F}_{2^{2k}}$. For {convenience}, we list the known results and ours in TABLE I where  $f_i(x)$ denote the known results and $y_i(x)$ denote ours.

 \begin{table}[htp]
 	\label{table2}
 	\centering
 	\caption{Permutation trinomials of the form $x^a+x^b+x^c$ over $\mathbb{F}_{q^2}$ where $q=2^k$}
 	\begin{tabular}{c | c  c  c | c | c}
 		\hline
 		& $a$ &  $b$  & $c$ & Conditions & Resourecs\\
 		\hline
 		\multirow{5}{*}{$f_i(x)$}
 		& $1$  & $q$  & $\left.{q^2}\middle/{2}\right.-\left.{q}\middle/{2}\right.+1$ & $\mathrm{gcd}(3,k)=1$ & \cite{LQC} \\
 		& $1$ & $q$ &  $2q-1$ & {$k$ is positive} & \cite{XH3}\\
 		& $1$  & $q+2$  & $\left.{q^2}\middle/{2}\right.+\left.{q}\middle/{2}\right.+1$ & $k$ is odd & \cite{LQC} \\
 		& $1$  & $lq-(l-1)$  & $q^2-lq+l$ & {$k$ is positive}  & \cite{DQ} \\
 		& $lq+l+3$  & $(l+1)q+l+2$  & $(l+3)q+l$ & $\mathrm{gcd}(2l+3,q-1)=1$ &\cite{Zieve1} \\
 		\hline
 		\multirow{6}{*}{$y_i(x)$} 	& $lq+l+3$  & $(l+4)q+l-1$  & $(l-1)q+l+4$ & $k$ is even and $\mathrm{gcd}(2l+3,q-1)=1$  & Th \ref{tab1}\\
 			& $lq+l+2$  & $(l+2)q+l$  & $(l-1)q+l+3$ & $\mathrm{gcd}(3,k)=1$ and $\mathrm{gcd}(l+1,q-1)=1$ & Th \ref{Rem2} \\
 		& $lq+l+2$  & $(l+4)q+l-2$  & $(l-1)q+l+3$ & $k\equiv2,4\pmod 6$ and  $\mathrm{gcd}(l+1,q-1)=1$ & Th \ref{tab2}\\
 		& $lq+l+3$  & $(l+3)q+l$  & $(l-1)q+l+4$ & $k\not\equiv2\pmod 4$ and $\mathrm{gcd}(2l+3,q-1)=1$ & Th \ref{Th3}\\
 		& $lq+l+1$ &  $(l+3)q+l-2$  & $(l-1)q+l+2$ & $k$ is even and $\mathrm{gcd}(2l+1,q-1)=1$ & Th \ref{Th4}\\
 		& $lq+l+1$ & $(l+4)q+l-3$  & $(l-2)q+l+3$ & $\mathrm{gcd}(3,k)=1$ and $\mathrm{gcd}(2l+1,q-1)=1$ & Th \ref{Th5}\\
 		\hline
 	\end{tabular}
 \end{table}

Let $f_1,f_2\in\mathbb{F}_{q^2}[x]$ and $f_1(x)=x^r(1+x^{m(q-1)}+x^{n(q-1)})$. From Definition \ref{def}, if $f_1$ and $f_2$ are multiplicative equivalent, then there exists an integer $d$ such that $\mathrm{gcd}(d,q-1)=1$ and $f_2(x)=f_1(x^d)=x^{rd}(1+x^{md(q-1)}+x^{nd(q-1)})$. {Then their  corresponding fractional trinomials are as follows. }
$$g_1(x)=x^r\frac{1+x^{-m}+x^{-n}}{1+x^m+x^n},$$
$$g_2(x)=x^{rd}\frac{1+x^{-md}+x^{-nd}}{1+x^{md}+x^{nd}}.$$
Therefore, $g_2(x)=g_1(x^d)$. {Hence $f_1$ and $f_2$ are multiplicative equivalent if and only if their
corresponding fractional polynomials are equivalent. Now   let us list the newly constructed  fractional polynomials.}
$$g_1(x)=\frac{x^5+x+1}{x^5+x^4+1},\qquad g_2(x)=\frac{x^6+x^2+x}{x^5+x^4+1}.$$
$$g_3(x)=\frac{x^5+x^4+x}{x^4+x+1},\quad g_4(x)=\frac{x^4+x^3+1}{x^5+x^2+x},\quad g_5(x)=\frac{x^6+x^4+1}{x^7+x^3+x}.$$

{Then we list  the known fractional polynomials as follows.}
$$g_6(x)=\frac{x^3+x^2+1}{x^3+x+1},\qquad g_7(x)=\frac{x^4+x^3+x}{x^3+x+1}.$$
It is easy to check that these $g_i(x)$, where $i=1,2,\cdots,7$ are pairwise {multiplicative } inequivalent. Therefore, our permutation trinomials are {
inequivalent to } existing permutation trinomials.

\subsection{The comparison over $\mathbb{F}_{3^{2k}}$}
{To the best of the authors' knowledge, there is only one class of permutation trinomial in $\mathbb{F}_{3^{2k}}$, which is listed in Theorem \ref{Th}.}

\begin{Th}
	\label{Th}
	
	 \cite{XH4} Let $f=ax+bx^q+x^{2q-1}\in\mathbb{F}_{q^2}[x]$, where $q=3^k$ . Then $f$ is a permutation trinomial over $\mathbb{F}_{q^2}$ if and only if one of the following is satisfied.
		\begin{enumerate}[(i)]
			\item $\left(-a\right)^{\frac{q+1}{2}}=-1$ or $3$, $b=0$.
			\item $ab\neq0$, $a=b^{1-q}$, $1-\frac{4a}{b^2}$ is a square of $\mathbb{F}_q$.
			\item $ab(a-b^{1-q})\neq0$, $1-\frac{4a}{b^2}$ is a square of $\mathbb{F}_q$, $b^2-a^2b^{q-1}-3a=0$.
		\end{enumerate}

	\end{Th}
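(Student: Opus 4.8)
\emph{Proof idea.} The plan is to follow the route used for the even‑characteristic trinomials of Section~2: pass from $f$ to a low‑degree rational function on $\mu_{q+1}$ via Lemma~\ref{lem1}, decide when that function is injective on $\mu_{q+1}$ by the symmetric‑function device, and then read off the closed forms $(i)$--$(iii)$ by a case split on $b$. First I would write $f(x)=ax+bx^{q}+x^{2q-1}=x\bigl(a+bx^{q-1}+(x^{q-1})^{2}\bigr)=x\,h(x^{q-1})$ with $h(t)=t^{2}+bt+a\in\mathbb{F}_{q^{2}}[t]$, and apply Lemma~\ref{lem1} over $\mathbb{F}_{q^{2}}$ with $d=q+1$ and $r=1$, so that the coprimality condition is automatic; hence $f$ permutes $\mathbb{F}_{q^{2}}$ if and only if $g(x)=x\,h(x)^{q-1}$ permutes $\mu_{q+1}$. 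A cheap necessary condition to dispose of first is that $h$ has no root on $\mu_{q+1}$ (otherwise $f$ vanishes somewhere other than $0$). Since $x^{q}=x^{-1}$ on $\mu_{q+1}$, where $h$ does not vanish one has
\[
g(x)=x\cdot\frac{h(x)^{q}}{h(x)}=x\cdot\frac{a^{q}+b^{q}x^{-1}+x^{-2}}{a+bx+x^{2}}=\frac{a^{q}x^{2}+b^{q}x+1}{x\,(x^{2}+bx+a)},
\]
a rational function of degree at most three with $g(x)^{q+1}=x^{q+1}h(x)^{q^{2}-1}=1$, so that $g$ is indeed a self‑map of $\mu_{q+1}$.

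Next I would test injectivity of $g$ on $\mu_{q+1}$ exactly as in the proofs of Theorems~\ref{Th3}--\ref{Th5}: assume $x_{1}\neq x_{2}$ in $\mu_{q+1}$ with $g(x_{1})=g(x_{2})$, clear denominators, cancel the factor $x_{1}-x_{2}$, and pass to the symmetric coordinates $u=x_{1}+x_{2}$, $v=x_{1}x_{2}$ subject to $v=u^{1-q}$ (since $u^{q}=x_{1}^{-1}+x_{2}^{-1}=u/v$). After a substitution $y=u$ or $y=u^{-1}$ this produces a single equation $E(y,y^{q})=0$ whose coefficients are built from $a,b,a^{q},b^{q}$; because $\mathrm{char}\,\mathbb{F}_{q}=3$ one may complete the square in $h$ (using $1/2=-1$), which streamlines $E$. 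Then ``$f$ permutes $\mathbb{F}_{q^{2}}$'' becomes ``$E$ has no solution $y\in\mathbb{F}_{q^{2}}$'', and I would split into the branches $b=0$; $b\neq0$ with $a=b^{1-q}$; and $b\neq0$ with $a\neq b^{1-q}$. In each branch $E$ should collapse --- after using $\mathrm{N}(b)=b^{q+1}\in\mathbb{F}_{q}$, the equivalence $ab^{q-1}=1\iff a=b^{1-q}$, and trace manipulations in the spirit of Lemma~\ref{lem3} to force the relevant quantities into $\mathbb{F}_{q}$ --- to a quadratic, or a product of quadratics, whose solvability is governed by a single discriminant lying in $\mathbb{F}_{q}$; ``$1-4a/b^{2}$ is a square of $\mathbb{F}_{q}$'' is precisely the non‑solvability of that quadratic, while the residual constraints $a=b^{1-q}$ and $b^{2}-a^{2}b^{q-1}-3a=0$ pin down the last degree of freedom. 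The branch $b=0$ is shorter and yields the power condition of $(i)$, which --- recalling that $3=0$ here --- also absorbs the monomial case $a=0$, where $f=x^{2q-1}$ is a permutation for every $k$ because $\gcd(2q-1,q^{2}-1)=1$. Conversely, whenever one of $(i)$--$(iii)$ fails, a root of the offending quadratic supplies an explicit pair $x_{1}\neq x_{2}$ with $g(x_{1})=g(x_{2})$.

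I expect the main obstacle to be the bookkeeping forced by $a,b$ lying in $\mathbb{F}_{q^{2}}$ rather than in $\mathbb{F}_{q}$: unlike in Section~2, the map $g$ does \emph{not} commute with the $q$‑power Frobenius on $\mu_{q+1}$ (one checks that $g(x^{q})\neq g(x)^{q}$ in general), so one cannot descend $g$ to a rational function over $\mathbb{F}_{q}$ and argue directly with $\mathrm{PGL}_{2}(\mathbb{F}_{q})$; the conjugates $a^{q}$ and $b^{q}$ must be carried through the whole computation and eliminated only at the end. A second, genuinely characteristic‑sensitive difficulty is that in characteristic $3$ a degree‑$3$ self‑map of $\mu_{q+1}$ can be bijective for ``sporadic'' reasons --- the same reason $x^{3}$ permutes $\mathbb{F}_{q}$ when $q\equiv2\pmod 3$ --- so one must verify whether the reduced $g$ ever degenerates to such a map; this is exactly why the characterisation here has three branches instead of the single clean condition available when $\mathrm{char}\,\mathbb{F}_{q^{2}}\neq3$ (compare Theorem~\ref{Th1}).
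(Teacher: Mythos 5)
This statement is not proved in the paper at all: Theorem \ref{Th} is quoted verbatim from Hou \cite{XH4} in Section 4 solely to show that the new trinomials of Section 3 are multiplicatively inequivalent to known ones, so there is no in-paper argument to compare yours against. Judged on its own terms, your proposal is a sensible opening move but not a proof. The reduction $f(x)=x\,h(x^{q-1})$ with $h(t)=t^2+bt+a$, and the passage via Lemma \ref{lem1} to the question of whether $g(x)=\frac{a^qx^2+b^qx+1}{x(x^2+bx+a)}$ permutes $\mu_{q+1}$, are both correct.

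The gap is everything after that. The entire content of the theorem is the claim that the collision equation $E(y,y^q)=0$ resolves, in each of your three branches, into exactly the conditions (i)--(iii); you assert this (``should collapse'', ``I expect'') without carrying out a single elimination, and there is good reason to doubt it goes through as sketched. Because $a,b\in\mathbb{F}_{q^2}$ rather than $\mathbb{F}_q$, the symmetric-function and trace machinery of Theorems \ref{Th3}--\ref{Th5} does not apply as stated: the equation in $u=x_1+x_2$, $v=x_1x_2$ has coefficients involving $a^q,b^q$, so one cannot reduce to a quadratic over $\mathbb{F}_q$ and read off a discriminant in $\mathbb{F}_q$ without first proving that the relevant combinations (such as $b^{q+1}$, $ab^{q-1}$, $b^2-a^2b^{q-1}-3a$) actually descend to $\mathbb{F}_q$; that descent is precisely where constraints like $a=b^{1-q}$ originate and is the hard part you have skipped. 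The ``only if'' direction also requires exhibiting explicit collisions (or a Hermite-type necessity argument) whenever all of (i)--(iii) fail, which you do not supply. For the record, Hou's actual proof does not go through fractional polynomials on $\mu_{q+1}$ at all: it proceeds by computing power sums $\sum_{x}f(x)^s$ and running a case analysis that occupies an entire paper, which is a fair indication that the branches will not ``collapse to a quadratic'' as easily as the sketch hopes.
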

	{We  collected in Theorem \ref{Thh} our newly constructed classes of permutation trinomials constructed  in Section 3.}
  These functions extend the list of known such permutations.
	\begin{Th}
		\label{Thh}
	\begin{enumerate}[(1)]
		\item (Corollary \ref{Cor1}) Let $q=3^k$ and $l\ge0$, where $k \not \equiv 0 \pmod 4$ and $\mathrm{gcd}(2l+1,q-1)=1$. Then $f(x)=x^{lq+l+1}+x^{(l+2)q+l-1}-x^{(l-2)q+l+3}$ permutes $\mathbb{F}_{q^2}$.
		\item  (Corollary \ref{Cor2})  Let $q=3^k$ and $l\ge0$, where $k$ is odd and $\mathrm{gcd}(2l+1,q-1)=1$. Then $f(x)=x^{lq+l+1}+x^{(l+3)q+l-2}-x^{(l-1)q+l+2}$ permutes $\mathbb{F}_{q^2}$.
	\end{enumerate}
\end{Th}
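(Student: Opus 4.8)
The plan is to derive both parts by the ``round trip'' through $\mu_{q+1}$ sketched in the introduction, feeding the two base cases already established in Section 3 into Lemma \ref{lem1}. Concretely, part (1) will be read off from Theorem \ref{3P1} together with Lemma \ref{lem1}, and part (2) from Theorem \ref{3P2} together with Lemma \ref{lem1}; no new hard analysis is required, since these are exactly Corollaries \ref{Cor1} and \ref{Cor2}.

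For part (1), I would first write the base trinomial $f_0(x)=x-x^{2q-1}+x^{q^2-2q+2}$ of Theorem \ref{3P1} in the shape $f_0(x)=x^{1}h(x^{q-1})$ with $h(x)=1-x^2+x^{-2}$. Since Theorem \ref{3P1} says $f_0$ permutes $\mathbb{F}_{q^2}$ when $k\not\equiv0\pmod4$, and $\gcd(1,q-1)=1$, the ``only if'' direction of Lemma \ref{lem1} gives that $g(x)=x\,h(x)^{q-1}$ permutes $\mu_{q+1}$ (in particular $h$ is nonvanishing on $\mu_{q+1}$, since otherwise $g$ would not even map $\mu_{q+1}$ into itself). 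Next I would fix $l\ge0$, set $r=1+(q+1)l=lq+l+1$, and note that on $\mu_{q+1}$, where $x^{q+1}=1$, one has $x^{r}h(x)^{q-1}=x\,(x^{q+1})^{l}h(x)^{q-1}=g(x)$, so condition (2) of Lemma \ref{lem1} holds automatically. Condition (1) demands $\gcd\bigl(r,(q^2-1)/(q+1)\bigr)=\gcd(r,q-1)=1$; because $q+1\equiv2\pmod{q-1}$ we have $r\equiv2l+1\pmod{q-1}$, so this is precisely the hypothesis $\gcd(2l+1,q-1)=1$. The ``if'' direction of Lemma \ref{lem1} then shows $x^{r}h(x^{q-1})$ permutes $\mathbb{F}_{q^2}$, and expanding $x^{r}\bigl(1-x^{2(q-1)}+x^{-2(q-1)}\bigr)$ produces the exponents $lq+l+1$, $(l+2)q+l-1$, $(l-2)q+l+3$ displayed in the statement.

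Part (2) follows by the identical mechanism, now with base trinomial $f_0(x)=x+x^{3q-2}-x^{q^2-q+1}$ from Theorem \ref{3P2} (valid for $k$ odd), written as $x^{1}h(x^{q-1})$ with $h(x)=1+x^3-x^{-1}$: the fractional polynomial $g(x)=x\,h(x)^{q-1}$ permutes $\mu_{q+1}$, taking $r=1+(q+1)l$ again leaves $x^{r}h(x)^{q-1}=g(x)$ on $\mu_{q+1}$, the side condition $\gcd(r,q-1)=1$ again collapses to $\gcd(2l+1,q-1)=1$ via $q+1\equiv2\pmod{q-1}$, and multiplying out $x^{r}h(x^{q-1})$ gives the exponents $lq+l+1$, $(l+3)q+l-2$, $(l-1)q+l+2$.

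The point worth flagging is that there is essentially no obstacle here: all of the difficulty was absorbed into the proofs of Theorems \ref{3P1} and \ref{3P2} (the trace/norm identities of Lemma \ref{lem3} and the permutation behaviour of $u^5$ and of $u^5+u^3-u$ on $\mathbb{F}_q$). What remains is purely formal, and the only places that need care are the free-parameter identity $x^{(q+1)l}=1$ on $\mu_{q+1}$ (this is exactly what makes $l$ a free parameter), the congruence reduction $\gcd(lq+l+1,q-1)=\gcd(2l+1,q-1)$, and the routine bookkeeping of the three exponents — and their signs, since we are in characteristic three — when $h(x^{q-1})$ is expanded.
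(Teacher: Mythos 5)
Your proposal is correct and is precisely the paper's own route: Theorem \ref{3P1} (resp.\ Theorem \ref{3P2}) combined with both directions of Lemma \ref{lem1}, with $x^{(q+1)l}=1$ on $\mu_{q+1}$ and $lq+l+1\equiv 2l+1\pmod{q-1}$ doing the bookkeeping --- this is exactly how the paper obtains Corollaries \ref{Cor1} and \ref{Cor2}, of which Theorem \ref{Thh} is a restatement. One caveat on part (1): expanding $x^r\bigl(1-x^{2(q-1)}+x^{-2(q-1)}\bigr)$ actually yields $x^{lq+l+1}-x^{(l+2)q+l-1}+x^{(l-2)q+l+3}$, whose signs do not match the displayed statement; this is a sign typo in the paper's Corollary \ref{Cor1} (note that the $+,+,-$ version reappears separately as Conjecture \ref{conj1}(3), so it is a genuinely different trinomial), and your derivation proves the correctly signed one.
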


It is easy to verify that  the corresponding fractional polynomials are inequivalent. Hence, Theorem \ref{Thh} is not multiplicative equivalent to Theorem \ref{Th}.

\subsection{Other comparison}

 Lee and Park \cite{LP} considered trinomials of the form $f=x^nh(x^{\frac{q-1}{3}})$, where $3\mid q-1$ and $h(x)=ax^2+bx+c\in\mathbb{F}_q[x]$, and they proved the following theorem

 \begin{Th}
 	\cite{LP}
 	In the above notation, $f$ is a permutation polynomial over $\mathbb{F}_q$ if and only if the following conditions are satisfied.
 	\begin{enumerate}[(i)]
 		\item  $\mathrm{gcd}\left(n,\frac{q-1}{3}\right)$=1.
 		\item  $h(\epsilon^i)\neq0$ for $0\le i<3$, where $\epsilon^3=1$.
 		\item  $\mathrm{log}_{\alpha}\frac{h(1)}{h(\epsilon)}\equiv\mathrm{log}_{\alpha}\frac{h(\epsilon)}{h(\epsilon^2)}\not\equiv n\pmod 3$, where $\alpha$ is a primitive element of $\mathbb{F}_q$ such that $\epsilon=\alpha^{\frac{q-1}{3}}$.
 	\end{enumerate}
 \end{Th}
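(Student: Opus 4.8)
The plan is to derive this directly from Lemma~\ref{lem1} applied with $d=3$, which is legitimate because $3\mid q-1$. That lemma says $f(x)=x^{n}h(x^{(q-1)/3})$ permutes $\mathbb{F}_q$ if and only if $\gcd(n,(q-1)/3)=1$ and the fractional polynomial $g(x):=x^{n}h(x)^{(q-1)/3}$ permutes $\mu_3=\{1,\epsilon,\epsilon^2\}$. The first of these is exactly condition (i), so the whole task reduces to proving that $g$ permutes $\mu_3$ if and only if (ii) and (iii) both hold.

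First I would settle (ii). Since $3\mid q-1$ we have $\mu_3\subseteq\mathbb{F}_q^{\ast}$, so each $h(\epsilon^i)$ lies in $\mathbb{F}_q$; if $h(\epsilon^i)=0$ for some $i$ then $g(\epsilon^i)=(\epsilon^i)^{n}\cdot 0^{(q-1)/3}=0\notin\mu_3$ (using $(q-1)/3\ge 1$), so $g$ cannot permute $\mu_3$. Hence (ii) is necessary, and from now on I assume it. Writing $\epsilon=\alpha^{(q-1)/3}$, every $y\in\mathbb{F}_q^{\ast}$ satisfies $y^{(q-1)/3}=\epsilon^{\log_\alpha y}$, where the exponent need only be read modulo $3$ because $\epsilon^3=1$. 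Taking $y=h(\epsilon^i)$ yields
\begin{equation*}
g(\epsilon^i)=\epsilon^{ni}\,h(\epsilon^i)^{(q-1)/3}=\epsilon^{\phi(i)},\qquad \phi(i):=n i+\log_\alpha h(\epsilon^i)\ \ (\bmod\ 3),
\end{equation*}
so $g$ permutes $\mu_3$ exactly when the self-map $\phi$ of $\mathbb{Z}/3\mathbb{Z}$ is a bijection.

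The remaining step is to rephrase the condition that $\phi$ be a bijection as the congruences (iii). The relevant elementary fact is that for residues $a_0,a_1,a_2\in\mathbb{Z}/3\mathbb{Z}$, $(a_0,a_1,a_2)$ is a permutation of $(0,1,2)$ if and only if $a_0+a_1+a_2\equiv 0$ and the $a_i$ are not all equal. Applying this with $a_i=\phi(i)=ni+m_i$, where $m_i:=\log_\alpha h(\epsilon^i)$, the sum equals $3n+m_0+m_1+m_2\equiv m_0+m_1+m_2\pmod 3$, so the first clause becomes $m_0+m_1+m_2\equiv 0$, equivalently $m_0-m_1\equiv m_1-m_2\pmod 3$. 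Under this clause, $\phi$ is constant exactly when $m_0\equiv n+m_1$, i.e.\ $n\equiv m_0-m_1\pmod 3$, so the non-constancy requirement becomes $m_0-m_1\not\equiv n\pmod 3$. Since $m_0-m_1=\log_\alpha\tfrac{h(1)}{h(\epsilon)}$ and $m_1-m_2=\log_\alpha\tfrac{h(\epsilon)}{h(\epsilon^2)}$, these two clauses together are precisely (iii). This closes the equivalence and hence the theorem.

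I do not expect a serious obstacle: once Lemma~\ref{lem1} is invoked the argument is essentially bookkeeping. The only real idea is the elementary characterization of permutations of a three-element set, which is what converts the abstract requirement that $\phi$ be a bijection into the explicit, checkable congruences (iii). The points needing care are that all the discrete logarithms are relevant only modulo $3$ (consistent with $\epsilon^3=1$), and that condition (iii) tacitly presupposes (ii) so that the ratios $h(1)/h(\epsilon)$ and $h(\epsilon)/h(\epsilon^2)$ are defined.
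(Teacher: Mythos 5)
Your proof is correct. Note first that the paper itself gives no proof of this statement: it is quoted verbatim from Lee and Park \cite{LP} in Section 4 purely for the purpose of comparing conditions, so there is no internal argument to measure yours against. Your derivation is the natural one and is complete: Lemma \ref{lem1} with $d=3$ reduces everything to whether $g(x)=x^nh(x)^{(q-1)/3}$ permutes $\mu_3$; condition (ii) is exactly what is needed for $g$ to map $\mu_3$ into $\mu_3$; and writing $g(\epsilon^i)=\epsilon^{ni+m_i}$ with $m_i=\log_\alpha h(\epsilon^i)\bmod 3$ turns the question into whether $i\mapsto ni+m_i$ permutes $\mathbb{Z}/3\mathbb{Z}$. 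Your elementary characterization (a triple of residues is a permutation of $(0,1,2)$ iff it sums to $0$ and is not constant) is easily checked by enumerating the multisets with sum $\equiv 0\pmod 3$, and the translation of "sum $\equiv 0$" into $m_0-m_1\equiv m_1-m_2$ and of "not constant" into $m_0-m_1\not\equiv n$ is exactly condition (iii). The two points you flag at the end --- that the logarithms only matter modulo $3$, and that (iii) presupposes (ii) --- are precisely the right ones to be careful about. This is essentially the cyclotomic-mapping viewpoint underlying \cite{LP} and \cite{WQ}, so while the paper offers nothing to compare with, your route is the standard and correct one.
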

 On one hand, if $q=2^{2k}$, $a=b=c=1$, then $f(x)=x^n(x^{\frac{2^{2k+1}-2}{3}}+x^{\frac{2^{2k}-1}{3}}+1)$. By Magma, it is easy to see that $f(x)$ does not permute $\mathbb{F}_{2^{2k}}$. On the other hand,  $\mathrm{char}\mathbb{F}_{q^2}=3$ does not satisfy the condition. Hence, our results are different to the above theorem.

 The following theorem was obtained by Zieve 	\cite{Zieve1}.

 \begin{Th}
 	\label{th}
 	\cite{Zieve1}
 	Let $r$ be a positive integer and $\beta\in\mu_{q+1}$. Let $h(x)\in\mathbb{F}_{q^2}[x]$ be a polynomial of degree $d$ such that $h(0)\neq0$ and $(x^dh(\left.{1}\middle/{x}\right.))^q=\beta h(x^q)$. Then $f(x)=x^rh(x^{q-1})$ is a permutation polynomial over $\mathbb{F}_{q^2}$ if and only if all the following hold.
 	\begin{enumerate}[(i)]
 		\item $\mathrm{gcd}(r,q-1)=1$.
 		\item $\mathrm(r-d,q+1)=1$.
 		\item $h$ has no roots in $\mu_{q+1}$.
 	\end{enumerate}
 \end{Th}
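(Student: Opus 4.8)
The plan is to invoke Lemma~\ref{lem1} over $\mathbb{F}_{q^2}$ with the divisor $q+1$, and then to use the functional equation to evaluate $h$ on $\mu_{q+1}$ explicitly. Since $(q^{2}-1)/(q+1)=q-1$, Lemma~\ref{lem1} (with the divisor there equal to $q+1$, not to be confused with $d=\deg h$ in the present statement) tells us that $f(x)=x^{r}h(x^{q-1})$ permutes $\mathbb{F}_{q^2}$ if and only if $\gcd(r,q-1)=1$ --- which is exactly (i) --- together with the requirement that the fractional polynomial $g(x):=x^{r}h(x)^{q-1}$ permute $\mu_{q+1}$. So the whole task reduces to showing that, under the hypotheses $h(0)\neq0$ and $(x^{d}h(1/x))^{q}=\beta h(x^{q})$, the map $g$ permutes $\mu_{q+1}$ if and only if conditions (ii) and (iii) hold.

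The heart of the argument is to rewrite $h(x)^{q}$ on $\mu_{q+1}$. Writing $h(x)=\sum_{i=0}^{d}a_{i}x^{i}$, the polynomial identity $(x^{d}h(1/x))^{q}=\beta h(x^{q})$ becomes, after expanding $x^{d}h(1/x)=\sum_{i}a_{d-i}x^{i}$ and comparing the coefficients of $x^{iq}$, the system of relations $a_{d-i}^{q}=\beta a_{i}$ for $0\le i\le d$. For $x\in\mu_{q+1}$ one has $x^{q}=x^{-1}$, hence
\[
h(x)^{q}=\sum_{i=0}^{d}a_{i}^{q}x^{iq}=\sum_{i=0}^{d}a_{i}^{q}x^{-i}=x^{-d}\sum_{j=0}^{d}a_{d-j}^{q}x^{j}=\beta x^{-d}\sum_{j=0}^{d}a_{j}x^{j}=\beta x^{-d}h(x).
\]
Consequently, for every $x\in\mu_{q+1}$ with $h(x)\neq0$ we get $h(x)^{q-1}=\beta x^{-d}$, and therefore $g(x)=x^{r}h(x)^{q-1}=\beta x^{\,r-d}$.

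With this in hand both implications are short. If (iii) fails, say $h(\zeta)=0$ with $\zeta\in\mu_{q+1}$, then $g(\zeta)=\zeta^{r}h(\zeta)^{q-1}=0\notin\mu_{q+1}$, so $g$ cannot be a bijection of $\mu_{q+1}$; hence a permuting $g$ forces (iii). Conversely, assume (iii): then $h$ never vanishes on $\mu_{q+1}$, so by the computation above $g$ agrees on $\mu_{q+1}$ with the map $x\mapsto\beta x^{\,r-d}$, which sends $\mu_{q+1}$ into itself because $\beta\in\mu_{q+1}$ (this is meaningful even when $r<d$, since $x$ is a root of unity). On the cyclic group $\mu_{q+1}$ of order $q+1$, translation by $\beta$ is a bijection, so $g$ permutes $\mu_{q+1}$ if and only if $x\mapsto x^{\,r-d}$ does, i.e.\ if and only if $\gcd(r-d,q+1)=1$, which is (ii). Feeding this back through the reduction from Lemma~\ref{lem1} yields the theorem.

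I expect the only genuinely delicate point to be the middle step: reading the functional equation $(x^{d}h(1/x))^{q}=\beta h(x^{q})$ correctly as the coefficient identities $a_{d-i}^{q}=\beta a_{i}$ and then re-summing with $x^{q}=x^{-1}$ to obtain $h(x)^{q}=\beta x^{-d}h(x)$ on $\mu_{q+1}$. Once the fractional polynomial collapses to the single monomial $\beta x^{\,r-d}$ on $\mu_{q+1}$, everything else is formal bookkeeping about cyclic groups; the hypothesis $h(0)\neq0$ enters only to make $x^{d}h(1/x)$ a polynomial of degree $d$, so that the degrees on the two sides of the functional equation are consistent.
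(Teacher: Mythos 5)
Your proof is correct. The paper itself imports this theorem from Zieve without proof, but your argument is exactly the intended one and matches the paper's general framework: reduce via Lemma~\ref{lem1} to the action of $g(x)=x^{r}h(x)^{q-1}$ on $\mu_{q+1}$, use the coefficient relations $a_{d-i}^{q}=\beta a_{i}$ coming from $(x^{d}h(1/x))^{q}=\beta h(x^{q})$ to show $h(x)^{q}=\beta x^{-d}h(x)$ on $\mu_{q+1}$, and conclude that $g$ collapses to the monomial $\beta x^{r-d}$ there, so that conditions (ii) and (iii) are precisely what is needed for $g$ to permute $\mu_{q+1}$.
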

 The key point in Theorem \ref{th} is that $(x^dh(\left.{1}\middle/{x}\right.))^q=\beta h(x^q)$.
  Let $h(x)=1+x^m-x^n$, where $n>m>0$. Then $h(x^q)=1+x^{qm}-x^{qn}$ whilst $(x^dh(\left.{1}\middle/{x}\right.))^q=-1+x^{qn-qm}+x^{qn}$.
  Therefore, $h(x)$ satisfies the condition of Theorem \ref{th} if and only if $n=2m$ and $\mathrm{char}\mathbb{F}_{q^2}=2$.
   Luckly, the $h(x)$ in this paper do not belong to this situation. Hence, our results are not contained by known {t}heorems.

\section{Conclusion}

 Permutation trinomials over finite fields are both interesting and important
 in theory and in many applications. However, there does not exist an effective criteria that deals with the permutation property of a general trinomial. In this paper,
 we discover six new classes of permutation trinomials (see Table I) over $\mathbb{F}_{2^{2k}}$ {and} three new classes of permutation trinomials (see Theorem \ref{Thh}) over $\mathbb{F}_{3^{2k}}$. These functions extend the list of such permutations. Moreover, distinct from most of the known permutation trinomials which are with fixed exponents if the corresponding finite field is given, our results are  some general classes of permutation trinomials  with one parameter in the exponents. Hence they are more general and meaningful.

  With the help of a computer, some conjectures are presented as follows.
\begin{Conj}
	\label{conj1}
    \begin{enumerate}[(1)]
		 \item Let $q=3^k$, $k$ is even and $f(x)=x^{lq+l+5}+x^{(l+5)q+l}-x^{(l-1)q+l+6}$, where $\mathrm{gcd}(5+2l, q-1)=1$. Then $f(x)$ is a  permutation trinomial over $\mathbb{F}_{q^2}$.
		 \item Let $q=3^k$, $f(x)=x^{lq+l+1}-x^{(l+4)q+l-3}+x^{(l-2)q+l+3}$ and $\mathrm{gcd}(1+2l, q-1)=1$. Then $f(x)$ is a permutation trinomial over $\mathbb{F}_{q^2}$.
		 \item Let $q=3^k$, $f(x)=x^{lq+l+1}+x^{(l+2)q+l-1}-x^{(l-2)q+l+3}$ and  $\mathrm{gcd}(1+2l, q-1)=1$. Then $f(x)$ is a permutation trinomial over $\mathbb{F}_{q^2}$ if $k \not \equiv 2 \pmod 4$.
    \end{enumerate}
\end{Conj}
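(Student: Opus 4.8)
The plan is to reduce each part to a statement about a fractional polynomial permuting $\mu_{q+1}$ via Lemma~\ref{lem1}, and then to attack that statement by one of the two methods already used in the paper. All three trinomials have coefficients in $\mathbb{F}_3\subseteq\mathbb{F}_q$ and are of the form $f(x)=x^{r}h(x^{q-1})$ with $r=(q+1)l+c$: here $(c,h)=(5,\,1+x^{5}-x^{-1})$ in part~(1), $(c,h)=(1,\,1-x^{4}+x^{-2})$ in part~(2), and $(c,h)=(1,\,1+x^{2}-x^{-2})$ in part~(3). Because $q+1\equiv2\pmod{q-1}$, the hypothesis $\gcd(2l+c,q-1)=1$ is exactly condition~(1) of Lemma~\ref{lem1}, so each claim is equivalent to condition~(2): that $g(x)=x^{r}h(x)^{q-1}$ permutes $\mu_{q+1}$. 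On $\mu_{q+1}$ one has $x^{(q+1)l}=1$, so $g$ there is independent of $l$, and after clearing denominators and reducing exponents modulo $q+1$ the task becomes to show that the fractional polynomials $g_1(x)=\frac{-x^{7}+x^{6}+x}{x^{6}+x-1}$ (for $k$ even), $g_2(x)=\frac{x^{6}+x^{4}-1}{-x^{7}+x^{3}+x}$ (for every $k$), and $g_3(x)=\frac{-x^{5}+x^{3}+x}{x^{4}+x^{2}-1}$ (for $k\not\equiv2\pmod4$) permute $\mu_{q+1}$ over $\mathbb{F}_{3^{k}}$. The trinomial in part~(3) is the one of Corollary~\ref{Cor1}, which Corollary~\ref{Cor1} already settles for $k\not\equiv0\pmod4$, so only the case $k\equiv0\pmod4$ in part~(3) is genuinely new.

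Two routes are available, both present in the paper. \emph{Route A} is the trace--norm method of Section~3; by Lemma~\ref{lem1} it suffices to prove one base trinomial (a single admissible value of $l$) permutes $\mathbb{F}_{q^2}$, and since the essential computation involves only $\gamma=\mathrm{Tr}(f(x))^{2}/\mathrm{N}(f(x))$, which factors out $l$ via $f(x)=\mathrm{N}(x)^{l}g_0(x)$ with $g_0(x)=x^{c}h(x^{q-1})$, one may compute as if $l=0$. One first checks $h$ has no root in $\mu_{q+1}$ and that $f(\mathbb{F}_{q^2}\setminus\mathbb{F}_q)\subseteq\mathbb{F}_{q^2}\setminus\mathbb{F}_q$: for part~(1), substituting $y=x^{q-1}\neq1$ into $f(x)=f(x)^{q}$ and simplifying yields $y^{7}=1$, which with $y^{q+1}=1$ forces $y=1$ since $\gcd(7,3^{k}+1)=1$ for $k$ even, while part~(3) reduces, just as in the proof of Theorem~\ref{3P1}, to $(y-1)^{5}=0$ and part~(2) to a similar condition on $y$. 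For $c\in\mathbb{F}_q$ the map restricts to $x\mapsto x^{2l+c}$, a bijection of $\mathbb{F}_q$. For $c\in\mathbb{F}_{q^2}\setminus\mathbb{F}_q$ one expands $\mathrm{Tr}(f(x))$ and $\mathrm{N}(f(x))$ in terms of $t=\mathrm{Tr}(x)$ and $n=\mathrm{N}(x)$ using Lemma~\ref{lem3} (extended to $\mathrm{Tr}(x^{6})$ and $\mathrm{Tr}(x^{7})$ for parts~(1)--(2)); this turns $f(x)=c$ into the assertion that an explicit polynomial $G(u)\in\mathbb{F}_q[u]$ permutes $\mathbb{F}_q$, and once that determines $s=t^{2}/n$ it also determines $t$, $n$, the set $\{x,x^{q}\}$, and $x$. \emph{Route B} is the method of Section~2: assuming $g_i(x_1)=g_i(x_2)$ with $x_1\neq x_2\in\mu_{q+1}$, put $u=x_1+x_2$, $v=x_1x_2=u^{1-q}$, clear denominators, substitute $y=u^{-1}$ and $v=y^{q-1}$, pass to $\alpha=y+y^{q}$, $\beta=y^{q+1}\in\mathbb{F}_q$, and derive a contradiction over $\mathbb{F}_q$ using characteristic-$3$ analogues of Lemmas~\ref{lem6} and~\ref{lem5}.

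The main obstacle, and the reason these statements are still conjectures, is the concluding permutation-over-$\mathbb{F}_q$ step. In characteristic $3$ the convenient nontrivial cube root of unity used in Section~2 (for instance in the proof of Theorem~\ref{Th5}) is gone, since $x^{3}-1=(x-1)^{3}$ over $\mathbb{F}_{3^{k}}$; the only comparable leverage is that $-1$ is a square in $\mathbb{F}_q$ exactly when $k$ is even, which is presumably why that parity appears in part~(1). For parts~(1)--(2) the polynomial identities reach degree $5$--$7$ and the resulting $G(u)$ is neither a monomial nor, visibly, a Dickson polynomial, so checking that $G$ permutes $\mathbb{F}_q$ uniformly in the admissible $k$ --- not merely for the range of $k$ verified on a computer --- is the missing input. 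For part~(3) the obstruction is explicit: Route~A reduces, exactly as in the proof of Theorem~\ref{3P1}, to $G(u)=(u-1)^{5}+1$, which permutes $\mathbb{F}_q$ if and only if $\gcd(5,q-1)=1$, i.e. $4\nmid k$; when $k\equiv0\pmod4$ the fibres of $G$ have size $5$, the quantity $s=\mathrm{Tr}(x)^{2}/\mathrm{N}(x)$ is no longer pinned down by $c$, and the argument of Theorem~\ref{3P1} collapses. Settling part~(3) therefore requires a genuinely different treatment of the case $k\equiv0\pmod4$ --- most plausibly Route~B applied to $g_3$ directly --- and I expect that to be where the real difficulty lies.
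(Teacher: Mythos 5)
This statement is one of the paper's open \emph{conjectures}: the paper offers no proof of it. All it does is observe (immediately after the statement) that by Lemma~\ref{lem1} each part is equivalent to the corresponding fractional polynomial in Conjecture~\ref{conj} permuting $\mu_{q+1}$, and then report a Magma verification for $1\le k\le 6$. Your proposal performs exactly that same reduction and then, candidly, stops: you sketch the two proof templates the paper uses elsewhere (the $u=x_1+x_2$, $v=x_1x_2$ resultant method of Section~2 and the trace--norm method of Section~3) and correctly identify that the missing ingredient in either route is establishing that an explicit degree-$5$ to degree-$7$ polynomial identity over $\mathbb{F}_q$ has no spurious solutions uniformly in $k$. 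So there is no gap in your proposal relative to the paper --- neither document contains a proof --- but your text should not be mistaken for one; it is a correct reduction plus an accurate diagnosis of why the problem is open.

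Two of your concrete sub-claims deserve a caveat. First, for part~(1) your computation gives $g_1(x)=\frac{-x^{7}+x^{6}+x}{x^{6}+x-1}$, which is indeed the fractional polynomial of $x^{5}\left(1+x^{5(q-1)}-x^{-(q-1)}\right)$ as written in the statement; the paper's Conjecture~\ref{conj}(1) instead lists $\frac{-x^{7}+x^{3}+x}{x^{6}+x^{4}-1}$, so one of the two displays in the paper carries a typo --- your derivation is the faithful one for the trinomial as stated. Second, your claim that part~(3) is already settled by Corollary~\ref{Cor1} except when $k\equiv0\pmod4$, and that Route~A reduces it to $G(u)=(u-1)^{5}+1$ exactly as in Theorem~\ref{3P1}, relies on the displayed exponent--sign pattern of Corollary~\ref{Cor1}, which is inconsistent with that corollary's own $h(x)=1-x^{2}+x^{-2}$ and with Theorem~\ref{3P1}. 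The sign pattern $(+,+,-)$ of part~(3) corresponds to $h(x)=1+x^{2}-x^{-2}$ and to the fractional polynomial $\frac{-x^{5}+x^{3}+x}{x^{4}+x^{2}-1}$ of Conjecture~\ref{conj}(3), which is \emph{not} the fractional polynomial $\frac{x^{5}+x^{3}-x}{-x^{4}+x^{2}+1}$ of Theorem~\ref{3P1}; redoing the trace computation with these signs gives $\mathrm{Tr}(f(x))=-t(1+s^{2})\ne t^{5}/n^{2}$ and hence a different $G$, so part~(3) is genuinely open for all $k\not\equiv2\pmod4$, not only for $k\equiv0\pmod4$.
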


By Lemma \ref{lem1}, it suffices to prove the following results correspondingly.

\begin{Conj}
	\label{conj}
	\begin{enumerate}[(1)]
		\item Let $q=3^k$, $k$ is even and $g(x)=\frac{-x^7+x^3+x}{x^6+x^4-1}$. Then $g(x)$ permutes $\mu_{q+1}$.
		\item Let $q=3^k$ and $g(x)=\frac{x^6+x^4-1}{-x^7+x^3+x}$. Then $g(x)$ permutes $\mu_{q+1}$.
		\item Let $q=3^k$, $g(x)=\frac{-x^5+x^3+x}{x^4+x^2-1}$. Then $g(x)$ permutes $\mu_{q+1}$ if $k \not \equiv 2 \pmod 4$.
	\end{enumerate}
\end{Conj}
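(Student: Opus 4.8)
The plan is to deduce each part from the corresponding trinomial assertion of Conjecture~\ref{conj1} through Lemma~\ref{lem1}. Write the fractional polynomial of part (i) as $g(x)=x^{r}h(x)^{q-1}$, with $r=1$ and $h(x)=1-x^{4}+x^{-2}$ in part (2), $r=1$ and $h(x)=1+x^{2}-x^{-2}$ in part (3), and the $r,h$ of Conjecture~\ref{conj1} in part (1) (picking $l$ so that $\gcd(r,q-1)=1$); then Lemma~\ref{lem1} replaces ``$g$ permutes $\mu_{q+1}$'' by ``$f(x)=x^{r}h(x^{q-1})$ permutes $\mathbb{F}_{q^{2}}$''. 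I would also record at the outset that $g_{(2)}=1/g_{(1)}$ as rational functions and $y^{-1}=y^{q}$ on $\mu_{q+1}$, so that $g_{(2)}(x)=g_{(1)}(x)^{q}$ there; since both maps send $\mu_{q+1}$ into $\mu_{q+1}$ (the relevant $h$ has no root there), $g_{(2)}$ is just $g_{(1)}$ followed by the $q$-power map, and parts (1) and (2) are equivalent --- so it suffices to settle one of them, and the hypotheses stated for (1) and (2) should be reconciled.

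For the trinomial $f$ I would run the trace--norm argument of Section~3 (the method of Theorems~\ref{3P1} and \ref{3P2}). The first step, $f(\mathbb{F}_{q^{2}}\setminus\mathbb{F}_{q})\subseteq\mathbb{F}_{q^{2}}\setminus\mathbb{F}_{q}$, is where the congruence hypothesis on $k$ enters: assuming $f(x)=f(x)^{q}$ with $x\notin\mathbb{F}_{q}$ and putting $y=x^{q-1}\in\mu_{q+1}\setminus\{1\}$, one clears denominators to reach an identity $(y-1)\,\Phi(y)=0$, so the containment is equivalent to $\Phi$ having no root in $\mu_{q+1}$. For part (3) a short computation gives $\Phi(y)=y^{4}-y^{3}+y^{2}-y+1=(y^{5}+1)/(y+1)$, whose roots are the primitive $10$th roots of unity; these lie in $\mu_{q+1}$ exactly when $10\mid q+1$, i.e.\ $k\equiv2\pmod 4$, so the containment holds precisely when $k\not\equiv2\pmod 4$ --- the stated hypothesis. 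The analogous $\Phi$'s for parts (1) and (2) should reproduce their conditions the same way.

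With the containment in hand, $c\in\mathbb{F}_{q}$ is immediate, and for $c\in\mathbb{F}_{q^{2}}\setminus\mathbb{F}_{q}$ I would compute $\alpha=\mathrm{Tr}(f(x))$ and $\beta=\mathrm{N}(f(x))$. In characteristic $3$ every $\mathrm{Tr}(x^{3j})$ collapses to $\mathrm{Tr}(x^{j})^{3}$, so Lemma~\ref{lem3} and its higher-degree analogues express $\alpha,\beta$ as rational functions of $t=\mathrm{Tr}(x)$ and $n=\mathrm{N}(x)$; then $s=t^{2}/n$ and $\gamma=\alpha^{2}/\beta$ should satisfy a single equation $R(s)=\gamma$ over $\mathbb{F}_{q}$ --- in part (3) one finds $1/\gamma=\bigl(s^{5}-(s-1)^{5}\bigr)/\bigl(s(s^{2}+1)^{2}\bigr)$. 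If $R(s)=\gamma$ has a unique solution in $\mathbb{F}_{q}$, then $t$ follows from the formula for $\alpha$ (a $(2l+1)$-th power equation, invertible because $\gcd(2l+1,q-1)=1$), then $n=t^{2}/s$, then the pair $\{x,x^{q}\}$, and finally $x$ since $f(x)\neq f(x)^{q}$. For part (3) one more reduction helps: when $k$ is odd, $\mu_{q+1}$ contains a primitive $4$th root of unity, so there is $a\in\mathbb{F}_{q^{2}}^{*}$ with $(a^{q-1})^{2}=-1$, and $x\mapsto ax$ carries the part-(3) trinomial (for $l=0$) to a scalar multiple of the polynomial $x-x^{2q-1}+x^{q^{2}-2q+2}$ of Theorem~\ref{3P1}, which permutes $\mathbb{F}_{q^{2}}$ whenever $k\not\equiv0\pmod 4$; this settles every odd $k$ and leaves only $k\equiv0\pmod 4$.

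The genuinely hard step, as in the proof of Theorem~\ref{Th5}, is proving $R(s)=\gamma$ has a unique solution. Crucially $s$ does not range over all of $\mathbb{F}_{q}$: for $x\notin\mathbb{F}_{q}$ the minimal polynomial $X^{2}-tX+n$ is irreducible, which forces $t^{2}-n$ --- equivalently the product $s(s-1)$ --- to be a nonsquare, so $s$ lies only in $\{s_{0}\in\mathbb{F}_{q}: s_{0}(s_{0}-1)\text{ is a nonsquare}\}$ (together with $s_{0}=0$). In Theorem~\ref{3P1} the analogous $R$ became the fifth-power map after $s\mapsto 1/s-1$, which permutes all of $\mathbb{F}_{q}$ exactly when $\gcd(5,q-1)=1$, i.e.\ $k\not\equiv0\pmod 4$; for the present $R$, when $5\mid q-1$ the map has nontrivial fibres, and one must show that each fibre meets the admissible set in at most one point --- i.e.\ that $R(s_{1})=R(s_{2})$ with $s_{1}\neq s_{2}$ forces $s_{1}(s_{1}-1)$ and $s_{2}(s_{2}-1)$ to have the same quadratic character, so at least one is a square and hence inadmissible. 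Proving this character identity for the fibres of these degree-$5$ rational maps, and absorbing the heavier algebra caused by the degree-$7$ numerators in parts (1)--(2), is where I expect the real work to lie; the rest follows the template already in place in Section~3.
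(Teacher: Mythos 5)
You should first be aware that Conjecture~\ref{conj} is exactly that in this paper: a conjecture. The authors give no proof, only Magma verification for $1\le k\le 6$, and they explicitly name the obstruction (``dealing with these high degree equations''), so there is no proof of the paper's to compare against and your argument must stand on its own. Judged that way, it contains genuine partial progress but concedes the decisive step. On the positive side, your observation that $g_{(2)}=1/g_{(1)}$, hence $g_{(2)}=g_{(1)}^{q}$ on $\mu_{q+1}$, so that parts (1) and (2) are equivalent assertions (and the hypotheses ``$k$ even'' versus ``no condition'' cannot both be sharp) is correct and is a worthwhile remark about the statement itself. Your treatment of part (3) for odd $k$ also checks out: with $y=x^{q-1}$ the equation $f(x)=f(x)^{q}$ for $f(x)=x+x^{2q-1}-x^{q^{2}-2q+2}$ does reduce to $(y-1)(y^{4}-y^{3}+y^{2}-y+1)=0$, the second factor is $\Phi_{10}$, and $10\mid q+1$ exactly when $k\equiv 2\pmod 4$; moreover $f(ax)=a\,(x-x^{2q-1}+x^{q^{2}-2q+2})$ when $(a^{q-1})^{2}=-1$, so Theorem~\ref{3P1} really does dispose of all odd $k$ in part (3).

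Nevertheless the proposal does not prove the conjecture. The case $k\equiv 0\pmod 4$ of part (3) and parts (1)--(2) in their entirety reduce, on your own account, to showing that $R(s)=\gamma$ has at most one admissible solution $s\in\mathbb{F}_q$, and when $5\mid q-1$ (or $7\mid q-1$ for the degree-$7$ parts) $R$ is genuinely non-injective on $\mathbb{F}_q$, so one must carry out the quadratic-character argument on $s(s-1)$ across the fibres. You state that this is ``where the real work lies'' and do not do it; that is precisely the gap the authors themselves identify, so the conjecture remains open after your argument. Two further cautions. First, you propose to ``deduce each part from the corresponding trinomial assertion of Conjecture~\ref{conj1}'', but Conjecture~\ref{conj1} is likewise unproven --- the paper derives it \emph{from} Conjecture~\ref{conj} via Lemma~\ref{lem1}, not conversely --- so nothing may be imported from it and the trinomial statements must be established from scratch. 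Second, the part (1) correspondence deserves a check: computing $x^{5}h(x)^{q-1}$ on $\mu_{q+1}$ for $h(x)=1+x^{5}-x^{-1}$ yields $(-x^{7}+x^{6}+x)/(x^{6}+x-1)$, which does not visibly coincide with the stated $(-x^{7}+x^{3}+x)/(x^{6}+x^{4}-1)$; your reciprocity with part (2) sidesteps this, but the discrepancy should be resolved before invoking Conjecture~\ref{conj1}(1) at all.
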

We have {verified Conjecture \ref{conj} by Magma for} $1\le k\le 6$.

{One can see from the proofs of the theorems in this paper that the key point to the proofs is to determine  the number of the solutions of some specified equations with high degree (cubic, quartic, and so on). The main difficulty to prove these conjectures lies in dealing with these high degree equations.
}

\end{document}